\begin{document}


\title{{\ttlit ModelHub}: Lifecycle Management for Deep Learning}
\title{Towards a Unified Framework for Data and Lifecycle Management for Deep Learning}
\title{Unified Lifecycle Management for Deep Learning}
\title{Towards Unified Data and Lifecycle Management for Deep Learning}



\eat{
%
%
%
%

\numberofauthors{1} 
\author{
%
%
\alignauthor
Hui Miao,~ Ang Li,~ Larry S. Davis,~ Amol Deshpande\\
       \affaddr{University of Maryland, College Park, MD, USA}\\
       \email{{\large{\{}}hui,~angli,~lsd,~amol{\large{\}}}@cs.umd.edu} 
}
}



\author{
    \IEEEauthorblockN{Hui Miao,~ Ang Li,~ Larry S. Davis,~ Amol Deshpande}
    \IEEEauthorblockA{Department of Computer Science \\
    University of Maryland, College Park, MD 20742\\
    {\large{\{}}hui,~angli,~lsd,~amol{\large{\}}}@cs.umd.edu}
}

\maketitle


\begin{abstract}
Deep learning has improved state-of-the-art results in many important fields, and has been the subject of much 
research in recent years, leading to the development of several systems for facilitating deep learning. 
Current systems, however, mainly focus on model building and training phases, while the issues of data management,
model sharing, and lifecycle management are largely ignored.
Deep learning modeling lifecycle generates a rich set of data artifacts, such as learned parameters and training logs, and comprises of several frequently conducted tasks, e.g., to understand the model behaviors and to try out new models. Dealing with such artifacts and tasks is cumbersome and largely left to the users. 
This paper describes our vision and implementation of a data and lifecycle management system for deep learning. First, we generalize model exploration and model enumeration queries from commonly conducted tasks by deep learning modelers, and propose a {\em high-level domain specific language (DSL)}, inspired by SQL, to raise the abstraction level and accelerate the modeling process. 
\icdeadjust{To help a modeler understand models better, we also propose two novel {\em model-comparison schemes} and related algorithms. Second, t}{dropping comparison}
To manage the data artifacts, especially the large amount of checkpointed float parameters, we design a novel model versioning system (\dlv), and a read-optimized parameter archival storage system (\weightstore) that minimizes storage footprint and accelerates query workloads without losing accuracy. \weightstore\ archives versioned models using deltas in a multi-resolution fashion by separately storing the less significant bits, and features a novel progressive query (inference) evaluation algorithm. Third, we show that archiving versioned models using deltas poses a new dataset versioning problem and we develop efficient algorithms for solving it. We conduct extensive experiments over several real datasets from computer vision domain to show the efficiency of the proposed techniques. 
\end{abstract}

\section{Introduction}
\label{sec:intro}

\eat{
\huicomment{a new introduction 2 pages, including abstract}

\begin{list}{$\bullet$}{\leftmargin 0.15in \topsep 0pt \itemsep -3pt}
\item Deep learning is an important method in analytics. 
\item Deep learning anatomy: a DAG structured mapping function. node: templated\_neuron, edge: connection. a DNN program specifies the DAG, input data, and a set of hyper-parameters.
\item Deep learning is end-to-end learning, features are learned directly from data, leading to unique human-in-the-loop life-cycle. 
\item Lifecycle properties: 
\begin{enumerate}
  \item Analysis (describe + compare) the trained models.
  \item Heuristic-driven enumerations of the models. 
  \item A rich set of artifacts with lineages.
\end{enumerate}
\item Lifecycle disadvantages under current systems: 
\begin{enumerate}
  \item repetition and time consuming of human effort of doing description and enumeration via imperative programs;
  \item heavy turnover and footprint: long training time, less usage frequency of older models; a trade-off the user needs to decide: save more models for reuse or delete + retrain.
  \item redundancy in large models; similar models exist; may not need all models with the same frequency and accuracy. 
  \item complicated and incoherent setups: hard to share, reuse, reproduce others' models. 
\end{enumerate}
\item So we propose \modelhub\ system to tackle the disadvantages:
\begin{enumerate}
  \item Increase abstraction level and lead to optimization: 
  \begin{enumerate}
    \item we propose a VCS based system. use friendly query templates to capture the process in modeling lifecycle.
    \item we propose a DQL language, to help the user enumerate models, and encode model selection criteria. 
  \end{enumerate}
  \item Treat weight matrices as first class data type:
    \begin{enumerate}
      \item we identify a matrix comparison operator: alignment based matrix reordering, which is useful for comparison queries and model storages.
      \item we explore the problem of storing many models which have both highly structured information and high entropy float matrices with low precision tolerance. 
      \item we develop model query directly on compressed storages, with guarantee of no errors, and at the same time faster query processing.
      It eases the trade-off between model storage and usage. we can compress all models with less storages, and do pay-as-you-go style of uncompressing to ensure no errors.
    \end{enumerate}
  \item the technique proposed extends naturally to a collaborative format and environment which has rich lineages and enables sharing, reusing, reproducing models. 
\end{enumerate}
\item Contributions:
\begin{enumerate}
  \item we are the first addressing model lifecycle management issues in deep learning. 
  \item we are the first proposing full-lifecycle declarative constructs for deep learning modeling and show their implementations.
  \item we treat deep learning weight matrix as first class data types, we formulate the storage problem of a large set of float parameter matrices with deep learning lifecycle specific query workloads, and explore the low precision tolerance and accuracy trade-off to save the data type.
  \item by using bytewise compression, we present a storage framework and model evaluation query acceleration technique with guarantees of no errors.
  \item on the weight matrix data type, we identify a new problem of alignment based matrix reordering problem, appeared in deep learning model comparison and storage. We shown the problem is NP-hard and propose efficient greedy algorithms.
\end{enumerate}
\item Our results show the techniques we propose are useful in real life models, and performance well on synthetic models. 
\item Outline: Sec2: Preliminary, Sec3: System Overview, Sec4: Optimizations for Weight Matrices, Sec. 4.1 Alignment Operator. Sec 4.2: Storage. Sec 5. Exp. Sec 6. Related work. Sec 7. Conclusion.
\end{list}

\quad \vfill
\pagebreak 
}

Deep learning models, also called {\em deep neural networks} (DNN), have dramatically improved the state-of-the-art results for many important reasoning and learning tasks including speech recognition, object recognition, and natural language processing in recent years \cite{lecun2015nature}. 
Learned using massive amounts of training data, \dnn\ models have superior generalization capabilities, and the intermediate layers in many deep learning models have been proven useful in providing effective semantic features that can be used with other learning techniques and are applicable to other problems.
However, there are many critical large-scale data management issues in learning, storing, sharing, and using deep learning models, which are largely ignored by researchers today, but are coming to the forefront with the increased use of deep learning in a variety of domains. 
In this paper, we discuss some of those challenges in the context of the modeling lifecycle, and propose a comprehensive system to address them. Given the large scale of data involved (both training data and the learned models themselves) and the increasing need for high-level declarative abstractions, we argue that database researchers should play a much larger role in this area. Although this paper primarily focuses on deep neural networks, similar data management challenges are seen in lifecycle management of others types of ML models like logistic regression, matrix factorization, etc.


\topic{DNN Modeling Lifecycle and Challenges:}
Compared with the traditional approach of {\em feature engineering} followed by {\em model training}~\cite{ce2014sigmod}, deep learning is an end-to-end learning approach, i.e., the features are not given by a human but are learned in an automatic manner from the input data. Moreover, the features are complex and have a hierarchy along with the network representation. 
This requires less domain expertise and experience from the modeler, but
understanding and explaining the learned models is difficult; why even well-studied models work so well is still a mystery\eat{ is unknown in theory} and under active research. 
Thus, when developing new models, changing the learned model (especially its network structure and {\em hyper-parameters}) becomes an empirical search task. 


\begin{figure}[!t]
\centering
\includegraphics[width=1.0\linewidth]{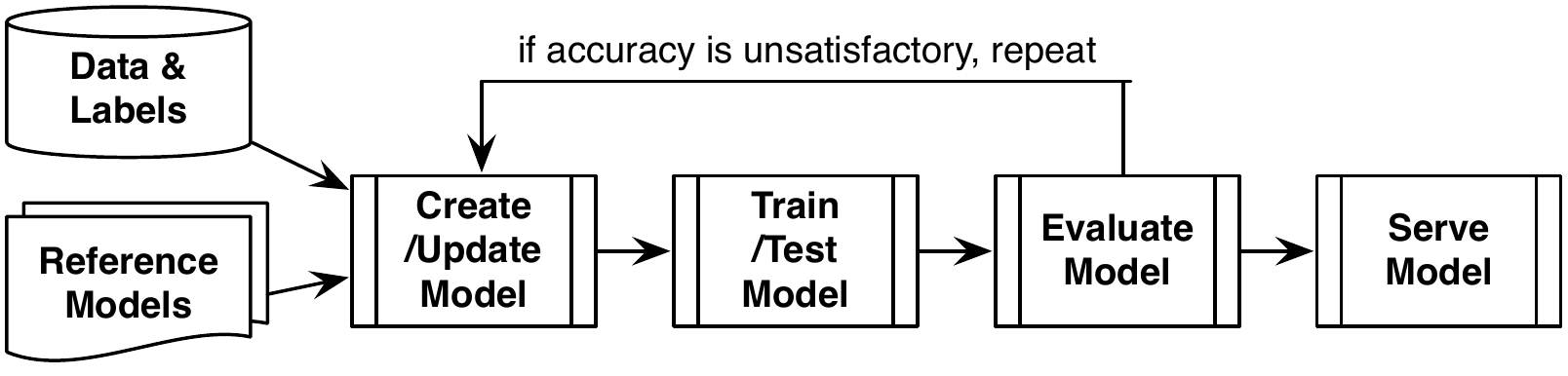}
\caption{Deep Learning Modeling Lifecycle}
\label{fig:lifecycle}
\end{figure}

In Fig.~\ref{fig:lifecycle}, we show a typical deep learning modeling lifecycle (we present an overview of deep neural networks in the next section). Given a prediction task, a modeler often starts from well-known models that have been successful in similar task domains; she then specifies input training data and output loss functions, and repeatedly adjusts the \dnn\ on operators
and connections like Lego bricks, tunes model hyper-parameters, trains and evaluates the model, and repeats this loop until prediction accuracy does not improve. Due to a lack of understanding about why models work, the adjustments and tuning inside the loop are driven by heuristics, e.g.,
adjusting hyper-parameters that appear to have a significant impact on the learned weights, applying novel layers or tricks seen in recent empirical studies, and so on. Thus, many similar models are trained and compared, and a series of model variants needs to be explored and developed. 
Due to the expensive learning/training phase, each iteration of the modeling loop takes a long period of time 
and produces many (checkpointed) snapshots of the model. As we noted above, this is a common workflow across many other ML models as well.

Current systems (Caffe~\cite{caffe2014mm}, Theano, Torch, TensorFlow~\cite{tensorflow}, etc.) mainly focus on model building and training phases, while the issues of data management, model sharing, and lifecycle management are largely ignored. Modelers are required to write external imperative scripts, edit configurations by hand and manually maintain a manifest of model variations that have been tried out; not only are these tasks irrelevant to the modeling objective, but they are also challenging and nontrivial due to the complexity of the model as well as large footprints of the learned models. More specifically, the tasks and data artifacts in the modeling lifecycle expose several systems and data management challenges, which include:
\begin{list}{$\bullet$}{\leftmargin 0.10in \topsep -2pt} 
       \item \textbf{Understanding \& Comparing Models}: It is difficult to keep track of the many models developed and/or understand the differences amongst them. Differences among both the metadata about the model (training sample, hyperparameters, network structure, etc.), as well as the actual learned parameters, are of interest. It is common to see a modeler write all\eat{ model} configurations in a\eat{n experiment} spreadsheet to keep track of temporary folders of input\eat{ data}, setup scripts, snapshots and logs, which is not only a cumbersome but also an error-prone process. \eat{Though it is fine to view the measurements, understanding the model difference is less-principled and requires a lot more work.
       repetition and time consuming sub-steps in model enumerations}
       \item \textbf{Repetitive Adjusting of Models}: The development lifecycle itself has time-consuming repetitive sub-steps, such as adding a layer at different places to adjust a model, searching through a set of hyper-parameters for the different variations, reusing learned weights to train models, etc., which currently have to be performed manually.
       \item \textbf{Model Versioning}: Similar models are possibly trained and run multiple times, reusing others' weights as initialization, either because of a changed input or discovery of an error. 
       There is thus a need to keep track of multiple model versions and their relationships over time, although the utilities of different models are very different. 
       \item \textbf{Parameter Archiving}: The storage footprint of deep learning models tends to be very large. Recent top-ranked models in the ImageNet task have billions of floating-point parameters and require hundreds of MBs to store one snapshot during training.
       Due to resource constraints, the modeler has to limit the number of snapshots, even drop all snapshots of a model at the cost of retraining when needed. 
      \item \textbf{Reasoning about Model Results:} Another key data artifact that often needs to be reasoned about is the results of running a learned model on the training or testing dataset. By comparing the results across different models, a modeler can get insights into difficult training examples or understand correlations between specific adjustments and the performance.
\end{list}
In addition, although not a focus of this paper, sharing and reusing models is not easy, especially because of the large model sizes and specialized tools used for learning and modeler generated scripts in the lifecycle.

\topic{ModelHub:}
In this paper, we propose the \modelhub\ system to address these challenges. 
The \modelhub\ system is not meant to replace popular training-focused \dnn\ systems, but rather designed to be used with them to accelerate modeling tasks and manage the rich set of lifecycle artifacts.
It consists of three key components: (a) a model versioning system (\DLV) to store, query and aid in understanding the models and their versions, (b) a model network adjustment and hyper-parameter tuning domain specific language (\DQL) to serve as an abstraction layer to help modelers focus on the creation of the models\eat{ instead of repetitive steps in the lifecycle}, (c) a hosted deep learning model sharing system (\modelhub) to exchange \DLV\ repositories and enable publishing, discovering and reusing models from others.

The key features and innovative design highlights of \modelhub\ are: 
\textbf{(a)} We use a \cmd{git}-like VCS interface as a familiar user interface to let the modeler manage and explore the created models in a repository, and an \cmd{SQL}-like model enumeration DSL to aid modelers in making and examining multiple model adjustments easily. 
\icdeadjust{
\textbf{(b)} Because model comparison is less-principled today, we propose two new model understanding and comparison schemes. }{no comparison}
\textbf{(b)} Behind the declarative constructs, \modelhub\ manages different artifacts in a split back-end storage: structured data, such as network structure, training logs of a model, lineages of different model versions, output results, are stored in {\em a relational database}, while learned float-point parameters of a model are viewed as a set of float matrices and managed in {\em a read-optimized archival storage (\weightstore)}. 
\textbf{(c)} Parameters dominate the storage footprint and floats are well-known at being difficult to compress. We study \weightstore\ implementation thoroughly under the context of \dnn\ query workload and advocate a segmented approach  to store the learned parameters, where the low-order bytes are stored independently of the high-order bytes. We also develop novel model evaluation schemes to use high order bytes solely and progressively uncompress less-significant chunks if needed to ensure the correctness of an inference query. 
\textbf{(d)} Due to the different utility of developed models, archiving versioned models using parameter matrix deltas exhibits a new type of dataset versioning problem which not only optimizes between storage and access tradeoff but also has model-level constraints.
\textbf{(e)} Finally, the VCS model repository design extends naturally to a collaborative format and online system which contain rich model lineages and enables sharing, reusing, reproducing \dnn\ models which are compatible across training systems.

\topic{Contributions}: 
Our key research contributions are:
\begin{list}{$\bullet$}{\leftmargin 0.10in \topsep -2pt} 
  \item We propose the first comprehensive \dnn\ lifecycle management system, study its design requirements, and propose declarative constructs (\DLV\ and \DQL) to provide high-level abstractions.
  \icdeadjust{
  \item We propose two new model understanding and comparison schemes, and study a new matrix reordering problem (for matrix alignment). 
  We analyze its complexity and propose greedy algorithms for solving it.
  }{no comparison}
  \item We develop \weightstore, a read-optimized archival storage system for dealing with a large collection of versioned float matrices. 
 \item We formulate a new dataset versioning problem with {\em co-usage constraints}, analyze its complexity, and design  efficient algorithms for solving it.
 \item We develop a progressive, approximate query evaluation scheme that avoids reading low-order bytes of the parameter matrices unless necessary.
  \item We present a comprehensive evaluation of \modelhub\ that shows the proposed techniques are useful in real life models, and scale well on synthetic models. 
\end{list}

\topic{Outline}: 
In Section~\ref{sec:preliminary}, we provide background on related topics in \dnn\ modeling lifecycle. In Section~\ref{sec:sys_overview}, we present an overview of \modelhub, and discuss the declarative interfaces. We describe the parameter archival store (\weightstore) in Section~\ref{sec:binary_storage}, present an experimental evaluation in Section~\ref{sec:experiments}, and closely related work in Section~\ref{sec:related_work}. 

\eat{
\begin{figure*}[!t]
\subfigure[Anatomy of A \dnn\ Model (LeNet)]{
\raisebox{5mm}{
  \includegraphics[totalheight=0.15 \textheight]{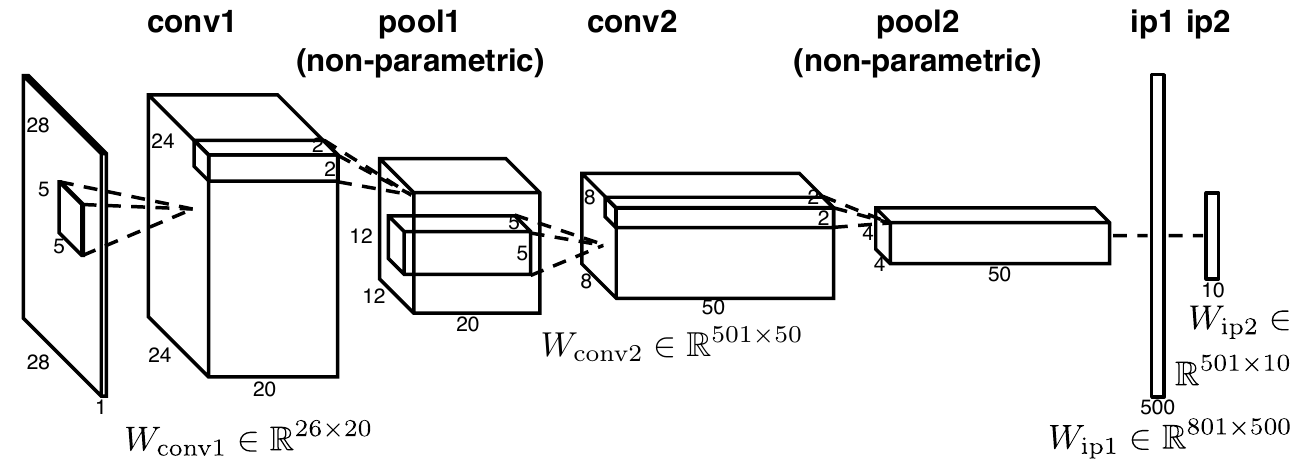}
  \label{fig:onesnapshot}
}}
\subfigure[Relationships of Model Versions and Weight Snapshots]{
  \includegraphics[totalheight=0.2\textheight]{snapshots.pdf} 
  \label{fig:snapshots}
}
\caption{Illustration of Weight Parameter Artifacts in a Modeling Process}
\label{fig:weight_datamodel}
\end{figure*}
}

\begin{figure}[!t]
\centering
\includegraphics[width=\linewidth]{one_snapshot.pdf}
\caption{Anatomy of A \dnn\ Model (LeNet)}
\label{fig:onesnapshot}
\end{figure}

\section{Background\eat{ on Lifecycle}}
\label{sec:preliminary}

To support our design decisions, we overview the artifacts and common task practices in \dnn\ modeling lifecycle. \eat{We also examine the dataset versioning problem from recent database community research and point out the inefficiencies for \dnn\ lifecycle management.}

\topic{Deep Neural Networks:} 
\eat{We begin with a brief, simplified overview.}
A deep learning model is a deep neural network (\dnn) consisting of many layers having nonlinear activation functions that are capable of representing complex transformations between input data and desired output.
Let $\mathbb D$ denote a data domain and $\mathbb O$ denote a prediction label domain (e.g., $\mathbb D$ may be a set of images; $\mathbb O$ may be the names of the set of objects we wish to recognize, i.e, {\em labels}). As with any prediction model, a DNN is\eat{ essentially} a mapping function $f:\mathbb{D}\to\mathbb{O}$ that
minimizes a certain loss function $L$, and is of the following form:\\
\begin{figure}[!h]
\centering
\includegraphics[width=0.9\linewidth]{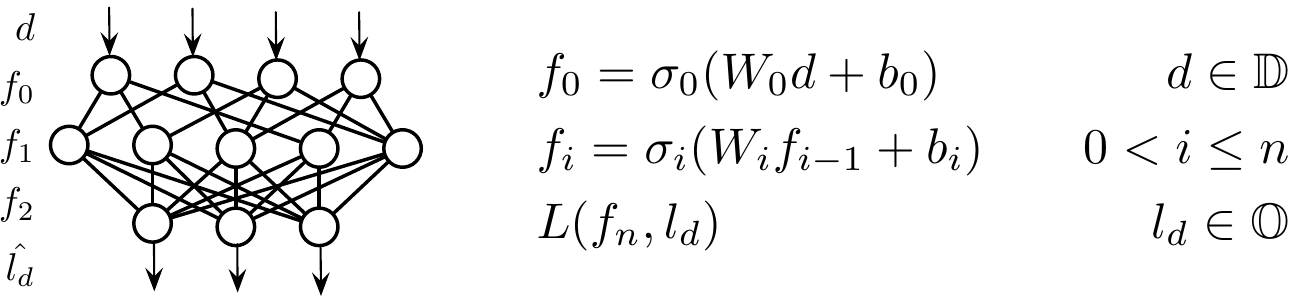}
\end{figure}

\noindent Here $i$ denotes the layer number,  $(W_i, b_i)$ are learnable weights and bias parameters in layer $i$, and $\sigma_i$ is an activation function that non-linearly transforms the result of the previous layer (common activation functions include sigmoid, ReLU, etc.). 
Given a learned model and an input $d$, applying $f_0, f_1, ..., f_{n}$ in order gives us the prediction label for that input data. 
In the training phase, the model parameters are learned by minimizing $L(f_{n}, l_d)$,
typically done through iterative methods, such as {\em stochastic gradient descent}.


Fig.~\ref{fig:onesnapshot} shows a classic {\em convolutional DNN}, LeNet. LeNet is proposed to solve a prediction task from handwritten images to digit labels $\{0\cdots9\}$. In the figure, a cube represents an intermediate tensor, while the dotted lines are unit transformations between tensors. More formally, a layer, $L_i : (W,H,X) \mapsto  Y$, is a function which defines data transformations from tensor $X$ to tensor $Y$. $W$ are the parameters which are learned from the data, and $H$ are the hyperparameters which are given beforehand. A layer is non-parametric if $W = \emptyset$.

In the computer vision community, the layers defining transformations are considered building blocks of a \dnn\ model, and referred to using a conventional name, such as {\em full} layer, {\em convolution} layer, {\em pool} layer, {\em normalization} layer, etc. The chain is often called the network \emph{architecture}. The LeNet architecture has two convolution layers, each followed by a pool layer, and two full layers, shown with layer shapes and hyperparameters in Fig.~\ref{fig:onesnapshot}.
Moreover, winning models in recent ILSVRC (ImageNet Large Scale Vision Recognition Competitions) are shown in Table~\ref{tb:network_comparison}, with their architectures described by a composition of common layers in regular expressions syntax for illustrating the similarities (Note the activation functions and detailed connections are omitted). \eat{As one can see, common layers are building blocks of \dnn\ models.} 

\dnn\ models are learned from massive data based on some architecture, and 
modern successful computer vision DNN architectures consist of a large number of float weight parameters (\emph{flops}) 
\eat{The number of float parameters (\emph{flops}) are} shown in Table~\ref{tb:network_comparison}, resulting in large storage footprints (GBs) and long training times (often weeks). Furthermore, the training process is often checkpointed and variations of models need to be explored, leading to many model copies.

\topic{Modeling Data Artifacts}: 
Unlike many other prediction methods, \dnn\ modeling results in a very large number of weight parameters, a rich set of hyperparameters, and learning measurements, which are used in unique ways in practice, resulting in a mixture of structured data, files and binary floating number artifacts: 
\begin{list}{$\bullet$}{\leftmargin 0.10in \topsep -2pt} 
\item \textbf{Non-convexity \& Hyperparameters}: A DNN model is typically non-convex, and $\{W\}$ is a local optimum of the underlying loss-minimization problem. Optimization procedure employs many tricks to reach a solution quickly~\cite{sgdtrick2012dnntrickbook}. The set of hyperparameters (e.g., learning rate, momentum) w.r.t. to the optimization algorithm need to be maintained. 
\item \textbf{Iterations \& Measurements}: Models are trained iteratively and checkpointed periodically due to the long running times.  A set of learning measurements are collected in various logs, including objective loss values and accuracy scores. 
\item \textbf{Fine-tuning \& Snapshots}: Well-known models are often learned from massive real-world data (ImageNet), and require large amounts of resources to train; when prediction tasks do not vary much (e.g., animal recognition vs dog recognition), the model parameters are reused as initializations and adjusted using new data; this is often referred to as fine-tuning. On the other hand, not all snapshots can be simply deleted, as the convergence is not monotonic.
\item \textbf{Provenance \& Arbitrary Files}: Alternate ways to construct architectures or to set hyperparameters lead to human-in-the-loop model adjustments. Initialization, preprocessing schemes, and hand-crafted scripts are crucial provenance information to explore models and reproduce results.
\end{list}


\begin{table}[!t]
\centering
\begin{tabular}{ccccc} 
\toprule
\textbf{Network} & \textbf{Architecture} (in regular expression) & \textbf{$|W|$ (flops)}
\\ 
\midrule
LeNet \cite{lenet90} & $(L_{conv}L_{pool})\{2\}L_{ip}\{2\}$ & $4.31 \times 10^{5}$\\ 
\midrule
AlexNet \cite{alexnet2012imagenet} & $(L_{conv}L_{pool})\{2\}(L_{conv}\{2\}L_{pool})\{2\}L_{ip}\{3\}$ & $6\times10^7$  \\ 
\midrule
VGG \cite{vgg14} & $(L_{conv}\{2\}L_{pool})\{2\}(L_{conv}\{4\}L_{pool})\{3\}L_{ip}\{3\}$ & $1.96 \times 10^{10}$  \\ 
\midrule
ResNet \cite{resnet} & $(L_{conv}L_{pool})(L_{conv})\{150\}L_{pool}L_{ip}$ & $1.13\times10^{10}$  \\ 
\bottomrule
\vspace{0.1mm}
\end{tabular}
\caption{Popular CNN Models for Object Recognition}
\label{tb:network_comparison}
\end{table}

\topic{Model Adjustment}:
In a modeling lifecycle for a prediction task, the \emph{update-train-evaluate} loop is repeated in daily work, and many model variations are adjusted and trained. 
In general, once data and loss are determined, model adjustment can be done in two orthogonal steps: a) network architecture adjustments where layers are dropped or added and
layer function templates are varied, and b) hyperparameter selections, which affect the behavior of the optimization algorithms. 
There is much work on search strategies to enumerate and explore both. 

\topic{Model Sharing}: 
Due to the good generalizability, long training times, and verbose hyperparameters required for large \dnn\ models, there is a need to share the trained models. 
Jia et al.~\cite{caffe2014mm} built an online venue (Caffe Model Zoo) to share models. Briefly, Model Zoo is part of a github repository\footnote{Caffe Model Zoo: \url{https://github.com/BVLC/caffe/wiki/Model-Zoo}} with a markdown file edited collaboratively. To publish models, modelers add an entry with links to download trained parameters in \cmd{caffe} format. Apart from the caffe community, similar initiatives are in place for other training systems.

\section{ModelHub System Overview}
\label{sec:sys_overview}

We show the \modelhub\ architecture including the key components and their interactions in
Fig.~\ref{fig:sys_arch}. \eat{At a high level, as a modeling lifecycle management tool, first it has to run on local machine and integrate with popular \dnn\ systems, such as \cmd{caffe}, \cmd{torch}, \cmd{tensorflow}; second an online module is served as a cloud host to save and exchange model versions.}
At a high level, the \modelhub\ functionality is divided among a local component and a remote component. The local functionality includes the integration with popular DNN systems such as \cmd{caffe}, \cmd{torch}, \cmd{tensorflow}, etc., on a local machine or a cluster. The remote functionality includes sharing of models, and their versions, among different groups of users. We primarily focus on the local functionality in this paper.

On the local system side, \DLV\ is a version control system (VCS) implemented as a command-line tool
(\dlv), that serves as an interface to interact with the rest of the local and remote components. Use of a specialized
VCS instead of a general-purpose VCS such as \cmd{git} or \cmd{svn} allows us to 
better portray and query the internal structure of the artifacts generated in a modeling lifecycle, such as network
definitions, training logs, binary weights, and relationships between models. The key utilities of
\dlv\ are listed in Table~\ref{tb:dlvcmds}, grouped by their purpose; we explain these in further
detail in
Sec.~\ref{subsec:query}. \DQL\ is a DSL we propose to assist modelers in deriving new models; the
\DQL\ query parser and optimizer components in the figure are used to support this language. The
\emph{model learning module} 
interacts with external deep learning tools
that the modeler uses for training and testing. They are essentially wrappers on specific \dnn\ systems that extract and reproduce modeling artifacts. 
Finally, the \modelhub\ service is a hosted toolkit to support publishing,
discovering and reusing models, and serves similar role for \dnn\ models as \emph{github} for software development or \emph{DataHub} for data science~\cite{datahub}.

\begin{figure}[!t]
\centering
\includegraphics[width=0.75\linewidth]{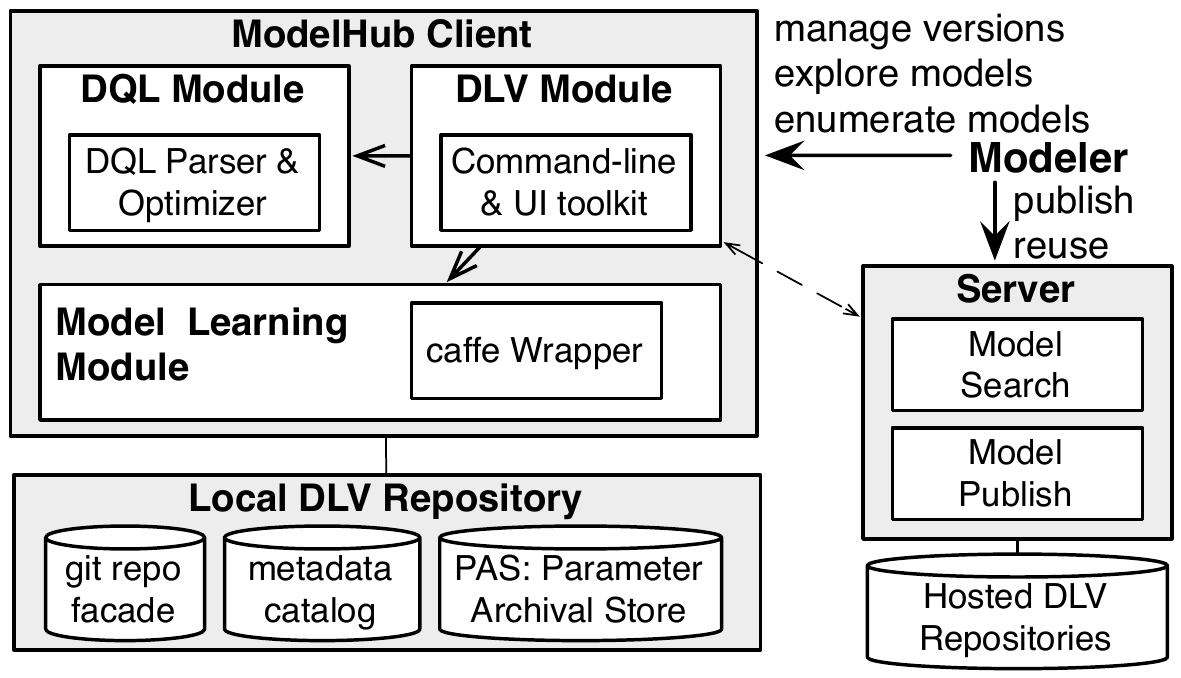}
\caption{\modelhub\ System Architecture}
\label{fig:sys_arch}
\end{figure}

\subsection{Data Model}
\label{subsec:datamodel}
\modelhub\ works with two data models: 
a conceptual \dnn\ model, and a data model for the versions in a \DLV\ repository.

\topic{\dnn\ Model:}
A \dnn\ model can be understood in different ways, as one can tell from the different model creation APIs in popular deep learning systems. In the formulation mentioned in Sec.~\ref{sec:intro}, if we view a function $f_i$ as a node and
dependency relationship $(f_i, f_{i-1})$ as an edge, it becomes a directed acyclic graph (DAG).
Depending on the granularity of the function in the DAG, either at the tensor arithmetic operator level (add, multiply), or at a logical composition of those operators (convolution layer, full layer), it forms different types of DAGs. In \modelhub, we consider a \dnn\ model node as a composition of unit operators (layers), often adopted by computer vision models. The main reason for this decision is that we focus on productivity improvement in the lifecycle, rather than implementation efficiencies of training and testing. 

\eat{
Though a \dnn\ model essentially is a parametric nested mapping function learned from massive examples. However, in practice, a \dnn\ model is represented in different ways.

Popular training systems often use a \emph{graph} construction API at different levels to present the nested function, where the graph \emph{node} is referred as \emph{operator}, \emph{layer}, \emph{gate} and etc. To illustrate the difference, if we view a function $f_i$ (a \emph{layer}) as a node and dependency relationship $(f_i, f_{i-1})$ as an edge, it becomes a directed acyclic graph (DAG) over layers, while if a graph node is defined on $+$, $\cdot$, $\sigma_i$ these basic tensor arithmetic \emph{operators}, its a DAG over operators. 
Modelers think and communicate in a high level (e.g. convolution layer, full layer), while may program in an API dialect of low level arithmetic operators.

In \modelhub, we consider a \dnn\ model node as a composition of unit operators (layers).\eat{, often adopted by computer vision models.} The main reason for the decision is that we focus on the productivity improvement in the lifecycle, rather than the implementation efficiencies for training and testing. 
}

\topic{VCS Data Model:}
When managing \dnn\ models in the VCS repository, a \emph{model version} represents the contents in
a single version. It consists of a network definition, a collection of weights (each of which is a
        value assignment for the weight parameters), a set of extracted metadata (such as
            hyper-parameter, accuracy and loss generated in the training phase), and a collection
        of files used together with the model instance (e.g., scripts, datasets). In addition, we
        enforce that a \emph{model version} must be associated with a human readable name for better
        utility, which reflects the logical groups of a series of improvement efforts over a \dnn\ model in practice.

In the implementation, model versions can be viewed as a relation \emph{model\_version}$($\underline{name, id}, N, W, M,
        F$)$, where id is part of the primary key of model versions and is auto-generated to
distinguish model versions with the same name. In brief, $N, W, M, F$ are the network definition,
            weight values, extracted metadata and associated files respectively. The DAG, \emph{N},
            is stored as two tables: Node$($\underline{id, node}, A$)$, where $A$ is a list of
            attributes such as layer name, and Edge$($\underline{from, to}$)$. $W$ is managed in our learned parameter storage (\weightstore, Sec.~\ref{sec:binary_storage}). $M$, the metadata, captures the provenance information of training and testing a particular model; it is extracted from training logs by the wrapper module, and includes the hyperparameters when training a model, the loss and accuracy measures at some iterations, as well as dynamic parameters in the optimization process, such as learning rate at some iterations. Finally, $F$ is file list marked to be associated with a model version, including data files, scripts, initial configurations, and etc.
            Besides a set of \emph{model version}s, the lineage of the \emph{model versions} are captured using a
            separate \emph{parent}$($\underline{base, derived}, commit$)$ relation. All of these relations are maintained/updated in a relational backend when
            the modeler runs the different \dlv\ commands that update the repository.


\begin{table}[!t]
\centering
\begin{tabular}{p{0.275\linewidth}p{0.11\linewidth}p{0.48\linewidth}}
\toprule
Type & Command & Description\\ 
\midrule
\multirow{3}{\linewidth}{model version management}
                     & \cmd{init}    & Initialize a \dlv\ repository.\\ 
                     & \cmd{add}     & Add model files to be committed.\\ 
                     & \cmd{commit}  & Commit the added files.\\ 
                     & \cmd{copy}    & Scaffold model from an old one.\\
                     & \cmd{archive} & Archive models in the repository. \\
\midrule
\multirow{3}{\linewidth}{model exploration}
                     & \cmd{list}    & List models and related lineages. \\ 
                     & \cmd{desc}    & Describe a particular model. \\ 
                     & \cmd{diff}    & Compare multiple models. \\ 
                     & \cmd{eval}    & Evaluate a model with given data. \\ 
\midrule
model enumeration   & \cmd{query}   & Run \DQL\ clause. \\ 
\midrule
\multirow{3}{\linewidth}{remote interaction}
                     & \cmd{publish}  & Publish a model to ModelHub.\\ 
                     & \cmd{search}  & Search models in ModelHub.\\ 
                     & \cmd{pull}   & Download from ModelHub.\\ 
\bottomrule
\end{tabular}
\label{tb:dlvcmds}
\caption{A list of key \dlv\ utilities. }
\end{table}

\subsection{Query Facilities}
\label{subsec:query}
Once the \dnn\ models and their relationships are managed in \DLV, the modeler can interact with them easily. 
The query facilities we provide can be categorized into two types: a) model exploration queries and b) model enumeration queries. 

\subsubsection{Model Exploration Queries}
Model exploration queries interact with the models in a repository, and are used to
understand a particular model, to query lineages of the models, and to compare several models. 
For usability, we design it as 
query templates via \dlv\ 
sub-command\eat{ with options}, similar to other VCS. 

\topic{List Models \& Related Lineages:} 
By default, the query lists all versions of all models including their commit descriptions and
parent versions; it also takes options, such as showing results for a particular model, or limiting
the number of versions to be listed. 
\\{\small{\verb|   dlv list [--model_name] [--commit_msg] [--last]|}}

\topic{Describe Model:} 
\dlvcmd{desc} shows the extracted metadata from a model version,
    such as the network definition, learnable parameters, execution footprint (memory and runtime),
    activations of convolutional DNNs, weight matrices, and evaluation results across iterations.
    Note the activation is the intermediate output of a \dnn\ model in computer vision and often
    used as an important tool to understand the model. The current output formats are a result of
    discussions with computer vision modelers to deliver tools that fit their needs. In
    addition to printing to console, the query supports {\bf HTML output} for displaying the images and visualizing the weight distribution.
\\{\small{\verb|   dlv desc [--model_name |\texttt{$\vert$}\verb| --version] [--output]|}}

\topic{Compare Models:} 
\dlvcmd{diff} takes a list of model names or version ids and allows the
modeler to compare the \dnn\ models. Most of \cmd{desc} components are aligned and returned in the
query result side by side. 
\icdeadjust{We discuss it in Sec.~\ref{subsec:comparison}.}{drop comparison}
\\{\small{\verb|   dlv diff [--model_names |\texttt{$\vert$}\verb| --versions] [--output]|}}

\topic{Evaluate Model:} 
\dlvcmd{eval} runs test phase of the managed models with an optional config
specifying different data or changes in the current hyper-parameters. The main usages of exploration
query are two-fold: 1) for the users to get familiar with a new model, 2) for the user\eat{ who wants}
to test known models on different data or settings. The query returns the accuracy and optionally the activations. 
It is worth pointing out  that complex evaluations can be done via model enumeration queries in \DQL.
\\{\small{\verb|   dlv eval [--model_name |\texttt{$\vert$}\verb| --versions] [--config]|}}

\subsubsection{Model Enumeration Queries}
\label{subsec:dql}
Model enumeration queries are used to explore variations of currently available models in a
repository by changing network structures or tuning hyper-parameters.
There are several operations that need to be done in order to derive new models: 1) Select
models from the repository to improve; 2) Slice particular models to get reusable components; 3)
Construct new models by mutating the existing ones; 4) Try the new models on different
hyper-parameters and pick good ones to save and work with. When enumerating models, we also want to
stop exploration of bad models early.

To support this rich set of requirements, 
we propose the \DQL\ domain specific language, that can be executed using ``\dlvcmd{query}''. Challenges of designing the language are: a) the data model is a mix of relational and the graph data models and b) the enumeration includes hyper-parameter tuning as well as network structure mutations, which are very different operations. We omit a thorough explanation of the language due to space constraints, and instead show the key operators and constructs\eat{ of the language} along with a set of examples (Query~\ref{dql:select}$\sim$\ref{dql:eval}) to show how requirements are met.

\begin{figure}
\begin{lstlisting}[caption={\DQL\ \cmd{select} query to pick the models.\quad\quad\quad\quad\quad\quad\quad\quad\quad\quad\quad}, label=dql:select]
select m1
where  m1.name like "alexnet_%" and
       m1.creation_time > "2015-11-22" and
       m1["conv[1,3,5]"].next has POOL("MAX")
\end{lstlisting}

\begin{lstlisting}[caption={\DQL\ \cmd{slice} query to get a sub-network.\quad\quad\quad\quad\quad\quad\quad\quad\quad\quad}, label=dql:slice]
slice m2 from m1
where m1.name like "alexnet-origin%"
mutate m2.input = m1["conv1"] and
       m2.output = m1["fc7"]
\end{lstlisting}

\begin{lstlisting}[caption={\DQL\ \cmd{construct} query to derive more models on existing ones.\quad}, label=dql:construct]
construct m2 from m1
where  m1.name like "alexnet-avgv1%" and 
       m1["conv*($1)"].next has POOL("AVG")
mutate m1["conv*($1)"].insert = RELU("relu$1")
\end{lstlisting}

\begin{lstlisting}[caption={\DQL\ \cmd{evaluate} query to enumerate models with different network architectures, search hyper-parameters, and eliminate models.}, label=dql:eval]
evaluate m
from "query3"
with config = "path to config"
vary config.base_lr in [0.1, 0.01, 0.001] and
     config.net["conv*"].lr auto and
     config.input_data in ["path1", "path2"]
keep top(5, m["loss"], 100)
\end{lstlisting}
\end{figure}

\topic{Key Operators:} 
We adopt the standard SQL syntax to interact with the repository. \DQL\ views
the repository as a single model version table. \eat{As mentioned in Sec~\ref{subsec:datamodel}, a}A model
version instance is a DAG, which can be viewed as object types in modern SQL conventions. In \DQL,
        \eat{DAG level} attributes can be referenced using attribute names (e.g. \verb|m1.name|,
                \verb|m1.creation_time|, \verb|m2.input|, \verb|m2.output|), while navigating the
        internal structures of the DAG, i.e. the Node and Edge EDB, we provide a regexp style
        \emph{selector operator} on a model version to access individual \dnn\ nodes\eat{. For example,}, e.g. 
        \verb|m1["conv[1,3,5]"]| in Query~\ref{dql:select} filters the nodes in \verb|m1|. 
        Once the selector operator returns a set of nodes, \verb|prev| and \verb|next| attributes of
        the node allow 1-hop traversal in the DAG. Note that \verb|POOL("MAX")| is one of the
        standard built-in node templates for condition clauses. Using \emph{SPJ} operators with
        object type \emph{attribute access} and the \emph{selector operator}, we allow 
        relational queries to be mixed with graph traversal conditions.

To retrieve reusable components in a DAG, and mutate it to get new models, we provide \textbf{slice},
   \textbf{construct} and \textbf{mutate} operators. \textbf{Slice} originates in programming analysis
   research; given a start and an end node, it
    returns a subgraph including all paths from the start to the end and the connections which
    are needed to produce the output. \textbf{Construct} can be found in graph query
    languages such as SPARQL to create new graphs. \eat{In our context, the \dnn\ DAG only has nodes
    with multiple attributes, which simplifies the language. }We allow \textbf{construct} to
    derive new DAGs by using selected nodes to \emph{insert} nodes by splitting an outgoing edge or
    to \emph{delete} an outgoing edge connecting to another node. \textbf{Mutate} limits
    the places where \emph{insert} and \emph{delete} can occur. For example, Query~\ref{dql:slice}
    and~\ref{dql:construct}\eat{ show queries which work on the DAG structure and} generate reusable
    subgraphs and new graphs. Query~\ref{dql:slice} slices a sub-network from matching models
    between convolution layer `conv1' and full layer `fc7', while Query~\ref{dql:construct} derives
    new models by appending a ReLU layer after all convolution layers followed by an average pool. All queries can be nested.\eat{ in the \emph{from} clause. }


Finally, {\bf evaluate} can be used to try out new models, with potential for early out if
expectations are not reached. 
We separate the network enumeration component from the hyper-parameter turning component; 
while network enumeration can be \eat{done via
\emph{select} or \emph{construct} and }nested in the \emph{from} clause, we introduce a \emph{with operator}
to take an instance of a tuning config template, and a \emph{vary operator} to express the
combination of activated multi-dimensional hyper-parameters and search strategies. \emph{auto} is
keyword implemented using default search strategies (currently grid search). To stop early and let
the user control the stopping logic, we introduce a \emph{keep operator} to take a rule consisting of 
stopping condition templates, such as top-k of the evaluated models, or accuracy threshold.
Query~\ref{dql:eval} evaluates the models constructed and tries combinations of at least three
different hyper-parameters, and keeps the top 5 models w.r.t. the loss after 100 iterations.

\eat{
Besides the query facilities we have described so far, there are a collection of features we are lack of space to describe in details, such as extraction and interactions with model metadata, weight management, and provenance queries of the models.

\subsection{Model Publishing \& Sharing}
As the model repository is standalone, we host the repositories as a whole in a \modelhub\ service. The modeler can use the \dlvcmd{publish} to push the repository for archiving, collaborating or sharing, and use \dlvcmd{search} and \dlvcmd{pull} to discover and reuse remote models. We envision such a form of collaboration can facilitate a learning environment, as all versions in the lifecycle are accessible and understandable with ease.
}

\subsection{ModelHub Implementation} 
On the local side, the current implementation of \modelhub\ maintains the data model in multiple back-ends and utilizes \cmd{git} to manage the arbitrary file diffs. Various queries are decomposed and sent to different backends and chained accordingly. 
On the other hand, as the model repository is standalone, we host the repositories as a whole in a \modelhub\ service. The modeler can use the \dlvcmd{publish} to push the repository for archiving, collaborating or sharing, and use \dlvcmd{search} and \dlvcmd{pull} to discover and reuse remote models. We envision such a form of collaboration can facilitate a learning environment, as all versions in the lifecycle are accessible and understandable with ease.


\icdeadjust{
\section{Model Comparison}
\label{subsec:comparison}
The subtle difference of model versions are hard to grasp, comparing models are time consuming and requires heavy scripting under current training systems. As accuracy is the goal of the whole process, often modelers judge models by simple performance measures (e.g. loss, accuracy). Shared scripts such as plotting architectures, and optimization training logs can be found in user community of specific training systems.

\modelhub\ supports a set of comparison schemes for different artifacts shown in Table~\ref{tb:dlv_comparison}. One can tell the different flavor of \cmd{diff} operations from mixture of data types. Besides common practice, new proposed schemes are highlighted in italic font.

\begin{table}[!t]
\centering
\begin{tabular}{ccccc} 
\toprule
\textbf{Artifact} & \textbf{Type} & \textbf{Scheme} 
\\ 
\midrule
Network Architecture & Graph & Plot, \emph{Graph Edit Distance} \\
\midrule
Learned Parameters & Tensor & Plot, \emph{Matrix Reordering} \\
\midrule
Prediction Result & Relational & Set Diff \\
\midrule
Hyperparameters & Key-Value & Set Diff \\
\midrule
Optimization Routine & Time Series & Plot \\
\bottomrule
\end{tabular}
\caption{\dlvcmd{diff}\ \ Comparison Schemes}
\label{tb:dlv_comparison}
\end{table}

Plotting is an important tool to understand a network architecture, while comparing the difference between two networks architectures or two tensors are not easy. By viewing a network architecture as a DAG in \modelhub, the comparison can be done via a minimizing graph edit distance routine. For tensor comparison, we propose a new matrix reordering scheme. 

\subsection{Align Network Architectures}
\huicomment{}

\subsection{Align Learned Parameters}
For ease of illustration, we focus on 2D matrices, as tensors in \dnn\ can be lowered from high dimension to connections between input and output neurons.
Informally, the basic comparison idea is given two matrices $A$ and $B$, we permute $B$'s rows and columns accordingly in order to find the most similar $B'$ w.r.t. to a cost function, e.g. euclidean distance, best compression bits. As an example, the direct and best delta matrix is shown in Fig.~\ref{fig:alignment_examples}. In Fig.~\ref{fig:alignment_fix_init_example}, we show the weight matrices of LeNet conv1 layer trained with the same initialization but flipped images for the scenario of weight reusing. In Fig.~\ref{fig:alignment_random_init_example}, the LeNet is trained with random initialization with the same image orders. As we can see, alignment not only show the connections of two matrices, but also derive more zeros. As shown later in the evaluation, together with segmented float matrices, alignment operator contributes to the overall storage performance.

\begin{figure}[!t]
\subfigure[Fixed Initialization]{
  \includegraphics[width=1.0\linewidth]{fixed_init_alignment.pdf} 
  \label{fig:alignment_fix_init_example}
}
\subfigure[Random Initialization]{
  \includegraphics[width=1.0\linewidth]{random_init_alignment.pdf}
  \label{fig:alignment_random_init_example}
}
\caption{Example of Direct And Best Aligned Delta. LeNet conv1: {\small{(LeNet A, B, Direct Delta A-B, Best Delta A-B')}}\huicomment{todo}}
\label{fig:alignment_examples}
\end{figure}

Next, we present the matrix alignment problem formally. As the matrices to be aligned not necessary having the same dimensions, we first define a \emph{permutation matrix} with capability to adapt dimensions.

\begin{definition}[Permutation Matrix]
Let a permutation $\mathbb{\psi} =$\\ $ (\psi_1, \psi_2, \cdots, \psi_n) \in \Psi$, $\Psi$ is all possible $n!$ permutations. Given two positive integer $s,t\in \mathbb{N}_+$, a permutation matrix of $\psi$ is a matrix $\mathcal{P}(\psi,s,t) \in [0,1]^{s \times t}$, where
\begin{center}
$\mathcal{P}(\psi,s,t)_{i,j} = \begin{cases}
1 &\text{when\ } i \leq n, \psi_i = j\\
0 &\text{otherwise}
\end{cases}$
\end{center}
\end{definition}

Using the permutation matrix, a permutation can be used in matrix multiplication to reorder matrix by row or column accordingly.

\begin{example}
A row permutation $\psi = (2, 1)$ of $A \in \mathbb{R}^{2 \times 4}$ to $\mathbb{R}^{3\times 4}$:
\begin{center}
{\small{
$\mathcal{P}(\psi,3,2) \times A = 
\begin{bmatrix} 
0 & 1 \\
1 & 0 \\
0 & 0 
\end{bmatrix} \times
\begin{bmatrix} 
8 & 2 & 5 & 3\\
4 & 1 & 2 & 2
\end{bmatrix}
=
\begin{bmatrix}  
4 & 1 & 2 & 2\\
8 & 2 & 5 & 3\\
0 & 0 & 0 & 0
\end{bmatrix}
$
}}
\end{center}

A column permutation $\phi = (1, 3, 2, 4)$ of $A \in \mathbb{R}^{2 \times 4}$ to $\mathbb{R}^{2\times 3}$:
\begin{center}
$A \times \mathcal{P}(\phi,4,3) = 
\begin{bmatrix} 
8 & 2 & 5 & 3\\
4 & 1 & 2 & 2
\end{bmatrix} \times
\begin{bmatrix} 
1 & 0 & 0  \\
0 & 0 & 1  \\
0 & 1 & 0  \\
0 & 0 & 0  
\end{bmatrix} =
\begin{bmatrix}  
8 & 5 & 2 \\
4 & 2 & 1 
\end{bmatrix}
$
\end{center}
\end{example}

With the permutation matrix, given two matrices and a cost function, we formulate the matrix alignment problem as follows:

\begin{problem}[Matrix Alignment]
Given two real matrices, $A \in \mathbb{R}^{m\times n}$, $B \in \mathbb{R}^{s \times t}$, we want to find two permutations to reorder $B$ to $A$, $\psi = (\psi_1, \psi_2, \cdots, \psi_s) \in \Psi_{row}$ and $\phi = (\phi_1, 
\phi_2, \cdots, \phi_t) \in \Phi_{col}$, such that:
\begin{displaymath}
\psi^*, \phi^* = \argmin_{\psi \in \Psi_{row}, \phi \in \Phi_{col}} \mathcal{C}(A - \mathcal{P}(\psi,m,s)\times B \times \mathcal{P}(\phi,t,n))
\end{displaymath}
where $\mathcal{C}: \mathbb{R}^{m \times n} \mapsto \mathbb{R}$ is a cost function. We denote the best delta matrix as $\Delta^*_{A,B} = A - \mathcal{P}(\psi^*,m,s)\times B \times \mathcal{P}(\phi^*,t,n)$.
\end{problem}

\begin{example}
Let $\mathcal{C}$ be $\left\|.\right\|_2$ norm, given two following matrices:
\begin{center}
{\small{
$A = \begin{bmatrix} 
8 & 2 & 5 & 3\\
4 & 1 & 2 & 2
\end{bmatrix}, B = \begin{bmatrix}
1 & 1 & 1\\
8 & 5 & 2\\
4 & 2 & 1\\
2 & 2 & 2 
\end{bmatrix}$
}}, \\ 
\end{center}
the permutations
$\psi^* = (2,3,1,4), \phi^*=(1,3,2)$ minimize the cost function:
$\left\|A - \mathcal{P}(\psi^*,2,4)\times B \times\mathcal{P}(\phi^*,3,4))\right\|_2$=\\
{\scriptsize{
$
\left\|\begin{bmatrix} 
8 & 2 & 5 & 3 \\
4 & 1 & 2 & 2 
\end{bmatrix} - 
\begin{bmatrix} 
0 & 1 & 0 & 0 \\
0 & 0 & 1 & 0 
\end{bmatrix}
\begin{bmatrix} 
1 & 1 & 1 \\
8 & 5 & 2 \\
4 & 2 & 1 \\
2 & 2 & 2 
\end{bmatrix}
\begin{bmatrix} 
1 & 0 & 0 & 0 \\
0 & 0 & 1 & 0 \\
0 & 1 & 0 & 0 \\
\end{bmatrix}\right\|_2
=
\left\|\begin{bmatrix} 
0 & 0 & 0 & 3 \\
0 & 0 & 0 & 2 \\
\end{bmatrix}\right\|_2$
}}
\label{exp:alignment_problem}
\end{example}


\topic{Complexity Analysis}:
To the best of our knowledge, the matrix alignment problem is not studied in the literature. We show its NP-completeness by using the graph edit distance problem (\ged).

\begin{theorem}
Matrix Alignment Problem is NP-Complete when $\mathcal{C}$ is additive.  
\end{theorem}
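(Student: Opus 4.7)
The proof has two parts: membership in NP and NP-hardness.

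For membership in NP, a pair of permutations $(\psi, \phi)$ serves as a polynomial-size witness. Verifying its cost reduces to computing $\Delta = A - \mathcal{P}(\psi, m, s)\, B\, \mathcal{P}(\phi, t, n)$ and evaluating the additive $\mathcal{C}$ as a sum of entrywise contributions, both of which are polynomial in the matrix dimensions and bit-length.

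For NP-hardness, I would reduce from the Graph Edit Distance (\ged) decision problem on general undirected graphs, which is known to be NP-complete. Given graphs $G_1, G_2$ with adjacency matrices $A_1, A_2$ and a threshold $k$, the plan is to construct a Matrix Alignment instance whose optimum encodes $\ged(G_1, G_2)$. The crucial obstacle is that Matrix Alignment permits \emph{independent} row and column permutations, while the vertex mapping underlying \ged\ of undirected graphs corresponds to a \emph{single} permutation applied to both rows and columns. To bridge this gap, I would augment the two adjacency matrices with identity ``anchor'' blocks, forming $A = \bigl[\begin{smallmatrix} A_1 & I \\ I & 0 \end{smallmatrix}\bigr]$ and $B = \bigl[\begin{smallmatrix} A_2 & I \\ I & 0 \end{smallmatrix}\bigr]$, and use an additive entrywise cost that assigns a very large weight $W$ to deviations inside the anchor positions. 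A scaling argument (taking $W$ larger than the maximum total non-anchor contribution) shows that any optimal alignment must preserve the anchor structure, which in turn forces $\psi = \phi$. With $\psi = \phi$ pinned, the residual cost of $\mathcal{C}$ becomes an additive accounting of edge insertions and deletions, together with vertex insertions/deletions handled naturally by the rectangular permutation matrices defined earlier, so the optimum value equals $\ged(G_1, G_2)$ up to a known constant; thresholding this gives the desired reduction.

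The main obstacle is the anchor gadget: it must simultaneously (i) strictly force $\psi = \phi$ at optimality, (ii) contribute a cost that is either constant across feasible alignments or can be subtracted off cleanly, and (iii) remain expressible by a genuinely additive $\mathcal{C}$ rather than a structural constraint that additivity alone cannot encode. Making the scaling tight and handling the case $|V_1| \neq |V_2|$ cleanly---so that surplus rows and columns map exactly to vertex insertions or deletions of unit cost under the rectangular permutation matrices---will be the most delicate step. Once that bookkeeping is in place, mapping the \ged\ threshold $k$ to a Matrix Alignment threshold is routine, completing the reduction and establishing NP-completeness.
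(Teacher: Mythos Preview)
Your reduction from graph edit distance is exactly the route the paper takes: the paper states only that ``we show its NP-completeness by using the graph edit distance problem (\textsc{ged})'' and gives no construction, so you have supplied the details it omits, and your identification of the independent-row/column-permutation issue as the crux is on target.

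One point to tighten. An \emph{additive} cost in the paper's usage (e.g., the squared Frobenius norm used throughout the alignment section) is position-uniform, $\mathcal{C}(\Delta)=\sum_{i,j} f(\Delta_{ij})$, so you cannot ``assign a very large weight $W$ to deviations inside the anchor positions'' through $\mathcal{C}$ itself. The fix is to push the weight into the matrix entries: replace your identity anchors by $M\cdot I$ for a sufficiently large scalar $M$, so that any mismatch on an anchor entry contributes $\Theta(M^2)$ under $f$ and dominates the total non-anchor cost. With that adjustment your forcing argument for $\psi=\phi$ (on the adjacency block) goes through, and the remainder of your outline---reading off edge edits from the residual on $A_1 - P A_2 P^{T}$ and handling $|V_1|\neq|V_2|$ via the rectangular permutation matrices---is sound.
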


\eat{
\begin{figure}[!t]
\centering
\includegraphics[width=1.0\linewidth]{snapshots.pdf}
\caption{Weight Artifacts in a DLV Repository}
\label{fig:snapshots}
\end{figure}

\begin{figure}[!t]
\centering
\includegraphics[width=1.0\linewidth]{one_snapshot.pdf}
\caption{One Snapshot Example  (LeNet)}
\label{fig:onesnapshot}
\end{figure}
}

\subsubsection{Greedy Hill Climbing Algorithm}

\begin{algorithm}[b!] 
\caption{Randomized Hill Climbing Algorithm}
\label{alg:greedy_alignment}
\begin{algorithmic}[1]
\REQUIRE Matrix $A, B$, Random Initial Points $r$, Max Iterations $k$
\ENSURE Local optimum ($\psi^*, \phi^*$) of the Matrix Alignment
\STATE $i = 0$; $\psi=(1,2,\cdots,m)$; $\phi=(1,2,\cdots,n)$
\STATE result = ($\psi, \phi$); cost = $\left\|A-\mathcal{P}(\psi,m,s)\times B \times \mathcal{P}(\phi,t,n)\right\|_2^2$
\WHILE{$ i < r$}
  \STATE random permutation $\psi, \phi$; $c = \infty$
  \FOR{$j=0; j < k; j$++}
    \STATE $\mathcal{B}_r =$ Construct\_RAB(A, B, $\psi,\phi$)
    \STATE $E_M^*, c =$ Maximum\_Weighted\_Matching($\mathcal{B}_r$)
    \STATE $\psi =$ Derived\_Row\_Permutation($E_M^*$)
    \STATE $\mathcal{B}_c =$ Construct\_CAB(A, B, $\psi,\phi$)
    \STATE $E_M^*, c =$ Maximum\_Weighted\_Matching($\mathcal{B}_c$)
    \STATE $\phi =$ Derived\_Column\_Permutation($E_M^*$)
  \ENDFOR
  \IF{$c <$ cost}
  \STATE result = ($\psi, \phi$); cost = $c$
  \ENDIF
  \STATE $i$++
\ENDWHILE
\RETURN result
\end{algorithmic}
\end{algorithm}



We propose a randomized hill climbing approach to address the matrix alignment problem. As an overview, by noticing that if we fix one permutation at a time, minimizing the cost by varying the other permutation is the same as a graph matching problem in a biclique, we can iteratively solve a series of maximum weighted bipartite matching problems and find a local optimum. With randomized initial permutation, we run the algorithm multiple times, and choose the best solution among the found local optima.

Now we illustrate the algorithm in detail. First, fixing one permutation at a time, e.g. the column permutation $\phi = \bar \phi$, the matrix alignment problem is finding the best row permutation such that: 
\begin{displaymath}
\psi^* = \argmin_{\psi \in \Psi_{row}} \mathcal{C}(A - \mathcal{P}(\psi,m,s)\times B \times \mathcal{P}(\bar \phi,t,n))
\end{displaymath}

Next, we define the row alignment biclique (RAB), and show the connections between its maximum weighted bipartite matching solution and the best row alignment $\psi^*$ in detail.

\begin{definition}[Row Alignment Biclique] 
Given two real matrices, $A \in \mathbb{R}^{m\times n}$, $B \in \mathbb{R}^{s \times t}$, a row alignment biclique is a bipartite graph $\mathcal{B}_r(V_1, V_2, E, w_e)$, where $V_1$ is a set of vertices representing all row vectors of $A$, $|V_1| = m$, and $v_{1,i} \in V_1$ is the i-{th} row $r_{A,i}$ of $A$; while $V_2$ represents the row vectors of the matrix:
\begin{displaymath}
B' = \mathcal{P}((1,2,\cdots,s), \max\{m,s\}, s)\times B\times \mathcal{P}((1,2,\cdots,t), t,n),
\end{displaymath}
$|V_2| = \max\{m,s\}$ and $v_{2,j} \in V_2$ is the j-{th} row $r_{B',j}$ of $B'$. Each $e_{i,j} = (v_i, v_j) \in E$ connects a row pair between $V_1$ and $V_2$, and $w_e: e \mapsto - \mathcal{C}(r_{A,i}, r_{B',j})$ is the weight, i.e. $- \left\|r_{A,i} - r_{B',j}\right\|_2^2$. 
\end{definition}

\begin{figure}[h!]
\centering
\includegraphics[width=0.7\linewidth]{row_permutation_exp.pdf}
\caption{Example of Row Alignment Biclique}
\label{fig:row_permutation_exp}
\end{figure}

Given $A \in \mathbb{R}^{m\times n}$ and $B \in \mathbb{R}^{s\times t}$, a \emph{maximal bipartite matching} $E_M$ of the RAB $\mathcal{B}_r(V_1, V_2, E, w_e)$ is a set of edges sharing no vertices and $|E_M| = m$. Given an $E_M$, we can get a \emph{derived row permutation} $\psi = (\psi_1, \psi_2, \cdots, \psi_s) \in \Psi_{row}$ as follows:
\begin{displaymath}
\psi_i = \begin{cases}
j &\text{when\ } (v_{1,i}, v_{2,j}) \in E_M\\
0 &\text{otherwise} 
\end{cases}
\end{displaymath}

\emph{The maximum weighted bipartite matching} $E^*_M$ is a maximal bipartite matching with the biggest $\sum_{e_i\in E_M} w_e(e_i)$ among all maximal bipartite matchings $\{E_M\}$.

\begin{example} In Fig.~\ref{fig:row_permutation_exp}, we use the two matrices $A$, $B$ from Exp.~\ref{exp:alignment_problem} to show its row alignment biclique. $B'$ is the resized matrix of $B$ by applying two permutation matrices $\mathcal{P}((1,2,3,4), 4, 4)$ and $\mathcal{P}((1,2,3),\\ 3, 4)$. The weight of each edge is shown in the figure. For instance, $w_e(v_{1,1}, v_{2,2}) = - ((8-8)^2 + (2-5)^2 + (5-2)^2 + (3-0)^2) = -27$.

The maximal matching $E_M^* = \{(v_{1,1}, v_{2,2}), (v_{1,2}, v_{2,3})\}$ with the sum of weight $-33$ is the maximum weighted bipartite matching. The derived permutation is $(2,3,0,0)$.
\end{example}


\begin{lemma} The derived row permutation $\psi^*$ of the maximum weighted bipartite matching $E^*_M$ is also the solution of the matrix alignment problem with fixed column permutation $\bar \phi$.
\end{lemma}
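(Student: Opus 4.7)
The plan is to reduce the fixed-$\bar\phi$ version of matrix alignment to a maximum weighted bipartite matching on $\mathcal{B}_r$ via two ingredients: (i) a size-preserving rewrite of the objective in terms of the padded matrix $B'$, and (ii) additivity of the squared Frobenius norm over rows.

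First I would absorb the fixed column permutation into $B$. Define $\tilde B = B \cdot \mathcal{P}(\bar\phi, t, n)$ and then let $B'$ be $\tilde B$ padded with zero rows up to $\max\{m,s\}$ rows, as in the definition of $\mathcal{B}_r$ (the row-side permutation matrix $\mathcal{P}((1,\ldots,s),\max\{m,s\},s)$ is exactly this zero-padding). A direct matrix calculation shows
\begin{equation*}
\mathcal{P}(\psi,m,s)\, B\, \mathcal{P}(\bar\phi,t,n) \;=\; \mathcal{P}(\psi,m,\max\{m,s\})\, B',
\end{equation*}
so the objective with fixed $\bar\phi$ equals $\|A - \mathcal{P}(\psi,m,\max\{m,s\})\, B'\|_2^2$. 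By the definition of $\mathcal{P}$, the $i$-th row of this product is $r_{B',\psi_i}$ when $1\le i\le \min\{m,s\}$ (with $\psi_i\in\{1,\ldots,\max\{m,s\}\}$), and is the zero row otherwise; and since the zero row is itself a row of $B'$ when $m>s$, we can reindex so that every row of $A$ is paired with some row of $B'$.

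Second, I would invoke additivity. Because $\|\cdot\|_2^2$ decomposes entrywise and hence row-wise,
\begin{equation*}
\bigl\|A - \mathcal{P}(\psi,m,\max\{m,s\})\, B'\bigr\|_2^2 \;=\; \sum_{i=1}^{m} \bigl\|r_{A,i} - r_{B',\psi_i}\bigr\|_2^2 \;=\; -\sum_{i=1}^{m} w_e\bigl(v_{1,i},\,v_{2,\psi_i}\bigr).
\end{equation*}
The right-hand sum is exactly the negative total weight of the edge set $E(\psi) = \{(v_{1,i}, v_{2,\psi_i}) : 1\le i\le m\}$ in $\mathcal{B}_r$. Because $\psi$ is a permutation, its entries are distinct, so $E(\psi)$ is a valid bipartite matching of size $m$; conversely, any maximal matching $E_M$ (of size $m$, which is possible since $|V_2|\ge |V_1|$) yields a unique $\psi$ via the derived-row-permutation rule, and the two maps are mutual inverses.

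Combining these two steps, minimizing the objective over $\psi \in \Psi_{\mathrm{row}}$ is equivalent to maximizing $\sum_e w_e(e)$ over maximal matchings $E_M$ in $\mathcal{B}_r$. Hence the derived permutation of $E_M^*$ is a minimizer $\psi^*$, which is the claim. The only delicate point — and the step I would write out most carefully — is the bookkeeping of the padding: making sure that when $m\ne s$, the identification between length-$s$ permutations (some of whose entries may be ``unused'') and size-$m$ matchings into the padded vertex set $V_2$ is a bijection that preserves the cost contribution of unmatched rows (which is $\|r_{A,i}\|_2^2$, realized by matching to a padded zero row of $B'$).
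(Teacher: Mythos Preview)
Your proposal is correct and follows essentially the same approach as the paper: both arguments rest on the row-wise additivity of the squared Frobenius norm, so that minimizing $\|A-\mathcal{P}(\psi,m,s)B\mathcal{P}(\bar\phi,t,n)\|_2^2$ over $\psi$ coincides with maximizing $\sum_e w_e(e)$ over maximal matchings in $\mathcal{B}_r$. The paper's proof is effectively a one-line statement of this equivalence, whereas you spell out the padding identity $\mathcal{P}(\psi,m,s)B\mathcal{P}(\bar\phi,t,n)=\mathcal{P}(\psi,m,\max\{m,s\})B'$ and the bijection between row permutations and size-$m$ matchings (including the $m\neq s$ bookkeeping), which the paper leaves implicit.
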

\begin{proof}
As $E^*_M$ has the largest $- \sum_{e} \left\|r_{A,i} - r_{B',j}\right\|_2^2$, the derived permutation $\psi^*$ also has the minimum $\left\|A - \mathcal{P}(\psi^*,m,s)\times B \times \mathcal{P}(\bar \phi,t,n)\right\|_2^2$.
\end{proof}

Due to the symmetric definition of the matrix alignment problem, we can define column alignment biclique similarly, as well as its maximum weighted bipartite matching. The derived permutation $\phi^*$ is also the solution for the matrix alignment problem with fixed row permutation $\psi^*$.

\begin{lemma} Iterative maximum weighted bipartite matching in Alg.~\ref{alg:greedy_alignment} converges to a local optimum solution.  
\end{lemma}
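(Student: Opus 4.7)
The plan is to establish convergence by a standard monotone-bounded argument, exploiting the two facts that (i) each matching step is globally optimal in its single coordinate (which was exactly the content of the previous lemma) and (ii) the feasible region is finite.

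First I would fix notation: let $(\psi^{(k)}, \phi^{(k)})$ denote the pair of permutations at the start of iteration $k$, and let $c^{(k)} = \mathcal{C}(A - \mathcal{P}(\psi^{(k)}, m, s)\, B\, \mathcal{P}(\phi^{(k)}, t, n))$ be its cost. Algorithm~\ref{alg:greedy_alignment} first constructs the row alignment biclique with $\phi^{(k)}$ fixed, then sets $\psi^{(k+1)}$ to the derived permutation of the maximum weighted matching, and then repeats symmetrically for columns. By the previous lemma applied with $\bar\phi = \phi^{(k)}$, the new $\psi^{(k+1)}$ minimizes the cost among all row permutations compatible with $\phi^{(k)}$; in particular, since $\psi^{(k)}$ is itself a row permutation, the cost cannot increase. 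The analogous argument on the column side (using the symmetric version of the lemma noted in the text) shows that the second half-step also cannot increase the cost. Hence the sequence $c^{(0)} \ge c^{(1)} \ge c^{(2)} \ge \cdots$ is monotonically non-increasing.

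Next I would observe that $c^{(k)}$ is bounded below (by $0$ in the $\|\cdot\|_2^2$ case, and more generally by $\inf_{\psi,\phi} \mathcal{C}(\cdot)$, which exists because $\mathcal{C}$ takes only finitely many values on the finite search space). Moreover, the set of permutation pairs $\Psi_{\mathrm{row}} \times \Phi_{\mathrm{col}}$ has cardinality $s!\,t!$, so the cost sequence takes values in a finite set. A monotone non-increasing real sequence drawn from a finite set must become constant after finitely many steps; let $k^*$ be the first index with $c^{(k^*)} = c^{(k^*+1)} = c^{(k^*+2)}$.

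Finally I would verify that $(\psi^{(k^*)}, \phi^{(k^*)})$ is a local optimum with respect to the neighborhood structure used by the algorithm, namely the set of pairs that differ from $(\psi^{(k^*)}, \phi^{(k^*)})$ in at most one coordinate. Because the row step at iteration $k^*$ produced the globally best row permutation given $\phi^{(k^*)}$, no alternative row permutation can strictly decrease the cost; the analogous statement holds for columns after the column step. Therefore no single-coordinate change reduces the objective, which is precisely the definition of a local optimum in this coordinate-descent sense, and the claim follows.

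The only subtle point, and the one I would be most careful about, is the notion of ``local optimum'': the algorithm only explores coordinate-wise changes, so the guarantee is local with respect to that neighborhood rather than with respect to, say, swapping a single entry in both $\psi$ and $\phi$ simultaneously. This is the standard guarantee for alternating-optimization schemes and matches how the randomized restarts in the outer loop of Alg.~\ref{alg:greedy_alignment} are meant to compensate for the gap between such local optima and the global optimum.
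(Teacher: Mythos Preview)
Your argument is correct and follows the same line as the paper: the paper's justification (given in the paragraph immediately after the lemma rather than as a formal proof) simply notes that alternating bipartite matchings make the cost monotonically decreasing, so the procedure is a hill-climbing method that reaches a local optimum. Your write-up supplies the details the paper omits---the finiteness of the permutation space, the eventual stabilization of the cost sequence, and the explicit identification of the coordinate-wise neighborhood---but the underlying idea is identical.
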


The full algorithm is described in Alg.~\ref{alg:greedy_alignment}, $k$ is the max iteration, $r$ is the random initial points. By applying the bipartite matching iteratively, i.e. fixing a row or column permutation with previous iteration result, the cost is monotonically decreasing and we can find a local optima of the matrix alignment problem. In other words, it is a hill climbing algorithm to find a local optimum. With random initial permutations, we can compute multiple local optima and choose the best one. 
}{no space for this}

\section{Parameter archival storage (\weightstore)}
\label{sec:binary_storage}


\begin{figure}[!t]
\centering
\includegraphics[width=0.9\linewidth]{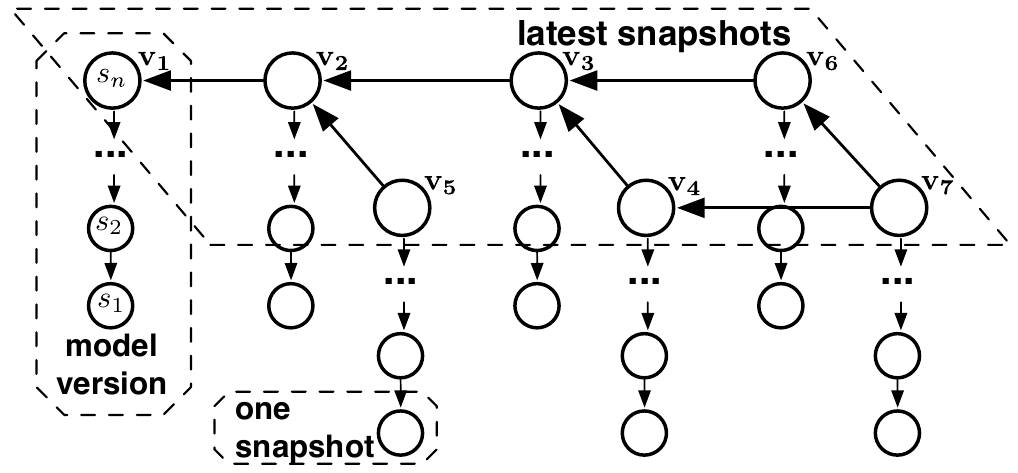}
\caption{Relationships of Model Versions and Snapshots}
\label{fig:snapshots}
\end{figure}


Modeling lifecycle for DNNs, and machine learning models in general, is centered around the learned parameters, whose storage footprint can be very large. 
The goal of \weightstore\ is to maintain a large number of learned models as compactly as possible, without compromising on the query performance.
Before introducing our design, we first discuss the queries of interest, and some key properties of the model artifacts. 
We then describe different options to store a single {\em float matrix}, and to construct {\em deltas (differences)} between two matrices.
We then formulate the optimal version graph storage problem, discuss how it differs from the prior work, and present algorithms for solving it.
Finally, we develop a novel approximate model evaluation technique, suitable for the segmented storage technique that \weightstore\ uses.
%

\subsection{Weight Parameters \& Query Type of Interests}
\label{subsec:pas_query_interests}
We illustrate the key weight parameter artifacts and the relationships among them in Fig.~\ref{fig:snapshots}, and also explain some of the notations
used in this section.
At a high level, the predecessor-successor relationships between all the developed models is captured as a {\bf version graph}. These relationships are user-specified and 
conceptual in nature, and the interpretation is left to the user (i.e., an edge $v_i \rightarrow v_j$ indicates that $v_j$ was an updated version of the model that the user
checked in after $v_i$, but the nature of this update is irrelevant for storage purposes). 
A model version $v_i$ itself consists of a series of snapshots, $s_1, ..., s_n$, which represent checkpoints during the training process (most systems will take such snapshots
due to the long running times of the iterations).
We refer the last or the best checkpointed snapshot $s_n$ as the \textbf{latest snapshot} of $v_i$, and denote it by $s_{v_i}$. 

One snapshot, in turn, consists of intermediate data $X$ and trained parameters $W$ (e.g., in Fig.~\ref{fig:onesnapshot}, the model has $431080$ parameters for $W$, and $19694\cdot b$ dimensions for $X$, where $b$ is the minibatch size). Since $X$ is useful only if training needs to be resumed, only $W$ is stored in \weightstore. 
Outside of a few rare exceptions, $W$ can always be viewed as a collection of float matrices, $\mathbb{R}^{m\times n}, m \geq 1, n \geq 1$, which encode the weights on the edges from outputs of the neurons in one layer to the inputs of the neurons in the next layer. Thus, we treat a {\em float matrix} as a first class data type in \weightstore\footnote{We do not make a distinction about the bias weight; the typical linear transformation $W'x+b$ is treated as $W\cdot(x,1) = (W',b)^T\cdot(x,1)$.}.

The retrieval queries of interest are dictated by the operations that are done on these stored models, which include: 
(a) testing a model, (b) reusing weights to fine-tune other models, (c) comparing parameters of different models, (d) comparing the results of different models on a dataset, and (e) model exploration queries (Sec.~\ref{subsec:query}). Most of these operations require execution of {\bf group retrieval} queries, where all the weight matrices in a specific snapshot need to be retrieved. 
This is different from range queries seen in array databases (e.g., SciDB), and also have unique characteristics that 
influence the storage and retrieval algorithms. 

\begin{list}{$\bullet$}{\leftmargin 0.10in \topsep -2pt} 
\item \emph{Similarity among Fine-tuned Models}: Although non-convexity of the training algorithm and differences in network architectures across models lead to non-correlated parameters, the widely-used fine-tuning practices (Sec.~\ref{sec:preliminary}) generate model versions with similar parameters, resulting in efficient delta encoding schemes.
\item \emph{Co-usage constraints:} Prior work on versioning and retrieval~\cite{vldb15versioning} has focused on retrieving a single artifact stored in its entirety. However, we would like to store the different matrices in a snapshot independently of each other, but we must retrieve them together. These co-usage constraints make the prior algorithms inapplicable as we discuss later.
\item \emph{Low Precision Tolerance}: DNNs are well-known for their tolerance to using low-precision floating point numbers (Sec.~\ref{sec:related_work}), both during training and evaluation. Further, many types of queries (e.g., visualization and comparisons) do not require retrieving the full-precision weights.
\item \emph{Unbalanced Access Frequencies}: Not all snapshots are used frequently. The latest snapshots with the best testing accuracy are used in most of the cases. The checkpointed snapshots have limited usages, including debugging and comparisons. 
\eat{\weightstore\ provides a set of storage schemes to let the user trade-off between storage and lossyness.}
\end{list}

\subsection{Parameters As Segmented Float Matrices}
\label{subsec:pas_float}

\topic{Float Data Type Schemes}: 
Although binary (1/-1) or ternary (1/0/-1) matrices are sometimes used in DNNs, in general \weightstore\ handles real number weights. Due to different usages of snapshots, \weightstore\ offers a handful of float representations to let the user trade-off storage efficiency with lossyness using \dlv. 
\begin{list}{$\bullet$}{\leftmargin 0.10in \topsep -2pt} 
\item \emph{Float Point}: DNNs are typically trained with single precision (32 bit)\eat{or less likely double precision (64 bit)} floats. This scheme uses the standard IEEE 754 floating point encoding to store the weights with sign, exponent, and mantissa bits. IEEE half-precision proposal (16 bits) and tensorflow truncated 16bits~\cite{tensorflow} are supported as well and can be used if desired.
\item \emph{Fixed Point}: \eat{Comparing with float point encoding where each float has exponent bits, f}Fixed point encoding has a global exponent per matrix, and each float number only has sign and mantissa using all $k$ bits. This scheme is a lossy scheme as tail positions are dropped, and a maximum of $2^k$ different values can be expressed. The entropy of the matrix also drops considerably, aiding in compression.
\item \emph{Quantization}: Similarly, \weightstore\ supports quantization using $k$ bits, $k\leq8$, where $2^k$ possible values are allowed. The quantization can be done in random manner or uniform manner by analyzing the distribution, and a coding table is used to maintain the \eat{quantization information (with only the} integer codes stored in the matrices in \weightstore\eat{)}.
This is most useful for snapshots whose weights are primarily used for fine-tuning or initialization.
\end{list}

The float point schemes present here are not new, and are used in DNN systems in practice~\cite{vanhoucke2011improving,han2015deep,courbariaux2014training}.\eat{, focusing on their implications of training/testing phases.} As a lifecycle management tool, \weightstore\ lets experienced users select schemes rather than deleting snapshots due to resource constraints. Our evaluation shows storage/accuracy tradeoffs of these schemes.

\eat{
\begin{table}[!t]
{\small{
\centering
\begin{tabular}{ccccc} 
\toprule
\textbf{Scheme} & \textbf{Param. Bits} & \textbf{Compress} & \textbf{Lossyness} & \textbf{Usage}\\ 
\midrule
Float Point  & 64/32/16 & Fair & Lossless & latest \\ 
\midrule
Fixed Point  & 32/16/8 & Good & Good  & latest \\ 
\midrule
Quantization & 8/k & Excellent & Poor & other\\ 
\bottomrule
\end{tabular}
}}
\caption{Float Representation Scheme Trade-offs}
\label{tb:lowprecision_offering}
\end{table}
}

\topic{Bytewise Segmentation for Float Matrices}: 
One challenge for \weightstore\ is the high entropy of float numbers in the float arithmetic representations, which leads to them being very hard to  compress. 
Compression ratio shown in related work for scientific float point datasets, e.g., simulations, is very low. 
The state of art compression schemes do not work well for \dnn\ parameters either (Sec.~\ref{sec:related_work}).
\eat{
\begin{table}[!h]
\centering
{\small{
\begin{tabular}{cccc} 
\toprule
\textbf{Method} & \textbf{LeNet} & \textbf{AlexNet} & \textbf{VGG16} \\ 
\midrule
fpzip  &  &  & \huicomment{to add}  \\ 
\midrule
S-Wavlet  &  &  &    \\ 
\midrule
ISOBAR &  &  &  \\ 
\bottomrule
\end{tabular}
}}
\caption{Compression Ratio for \dnn\ Parameters}
\label{tb:compression_tech}
\end{table}
}
By exploiting \dnn\ low-precision tolerance, we adopt bytewise decomposition from prior work~\cite{schendel2012isobar,bhattacherjee2014pstore} and extend it to our context to store the float matrices. 
The basic idea is to separate the high-order and low-order mantissa bits, and so a float matrix is stored in multiple chunks; the first chunk consists of 8 high-order bits, and the rest are segmented one byte per chunk. 
One major advantage is the high-order bits have low entropy, and standard compression schemes (e.g., \emph{zlib}) are effective for them.

Apart from the simplicity of the approach, the key benefits of segmented approach are two-fold: (a) it allows offloading low-order bytes to remote storage, (b) \weightstore\ queries can read high-order bytes only, in exchange for tolerating small errors.
Comparison and exploration queries (\dlvcmd{desc}, \dlvcmd{diff}) can easily tolerate such errors and, as we show in this paper, \dlvcmd{eval} queries can also be made tolerant to these errors.

\eat{
\begin{table}[!h]
{\small{
\centering
\begin{tabular}{ccccc} 
\toprule
\textbf{Read (bits)} & Float (64) & Float (32) & Fix (k/n) & Fix (32/n)
\\ 
\midrule
16 & & & & \huicomment{todo}\\ 
\midrule
24 & & & & \\ 
\midrule
32 & & & & \\ 
\bottomrule
\end{tabular}
}}
\caption{Maximum Absolute Error With High-Order Bytes}
\label{tb:bytewise_error}
\end{table}
}

\topic{Delta Encoding Across Snapshots}: 
We observed that, due to the non-convexity in training, even re-training the
same model with slightly different initializations results in very different
parameters. 
However, the parameters from checkpoint snapshots for the same
or similar models tend to be close to each other. 
%
Furthermore, across model versions, fine-tuned models generated using fixed
initializations from another model often have similar parameters. The
observations naturally suggest use of {\em delta encoding} between checkpointed snapshots
in one model version and latest snapshots across multiple model versions; i.e., instead of storing
all matrices in entirety, we can store some in their entirety and others as differences from those.
Two possible delta functions (denoted $\cmstdeltaop$) are {\em arithmetic subtraction} and {\em bitwise XOR}\footnote{
Delta functions for matrices with different dimensions are discussed in the long version of the paper; 
techniques in
Sec~\ref{sec:binary_storage} work with minor modification.}. 
We find the compression footprints when applying
the diff $\cmstdeltaop$ in different directions are similar.
We study the delta operators on real models in Sec.~\ref{sec:experiments}. 


\begin{figure}[!t]
\subfigure[Matrix Storage Graph {\newline}]{
\includegraphics[width=0.325\linewidth]{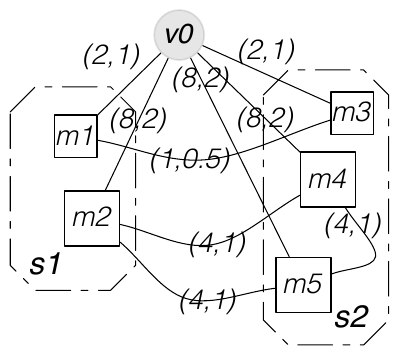}
\label{fig:storage_graph}
}\subfigure[Optimal Plan {\newline} without Constraints]{
\includegraphics[width=0.295\linewidth]{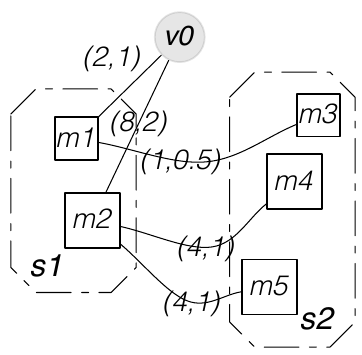}
\label{fig:storage_plan_mst}
}\subfigure[Optimal Plan with {\newline} {\scriptsize{$\cmstvcr^{\cmstscheme{i}}(s_1)\leq3 \wedge \cmstvcr^{\cmstscheme{i}}(s_2)\leq6$}}]{
\includegraphics[width=0.295\linewidth]{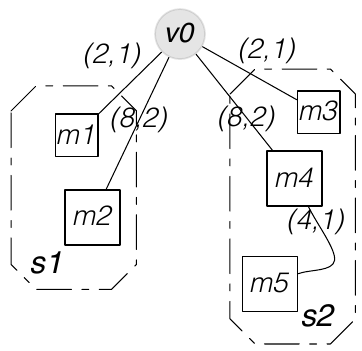}
\label{fig:storage_plan_constrained}
}
\caption{Illustrating Matrix Storage Graph \& Plan using a Toy Example}
\label{fig:optimal_archival_storage_exp}
\end{figure}

\subsection{Optimal Parameter Archival Storage}
\label{subsec:pas_version}

Given the above background, we next address the question of how to best store a collection of model versions, so that the total storage footprint occupied by the large segmented float matrices is minimized while the retrieval performance is not compromised. This recreation/storage tradeoff sits at the core of any version control system. In recent work~\cite{vldb15versioning}, the authors study six variants of this problem, and show the NP-hardness of most of those variations. 
However, their techniques cannot be directly applied in \weightstore, primarily because their approach is not able to handle the {\em group retrieval (co-usage)} constraints.



We first introduce the necessary notation, discuss the differences from prior work, and present the new techniques we developed for \weightstore.
In Fig.~\ref{fig:snapshots}, a \emph{model version} $v \in V$ consists of time-ordered checkpointed \emph{snapshots}, $S_v = {s_1, ..., s_n}$. Each \emph{snapshot}, $s_i$ consists of a named list of float matrices $M_{v,i} = \{m_k\}$ representing the learned parameters. All matrices in a repository, $\cmstallmatrix = \bigcup_{v \in V} \bigcup_{s_i \in S_v} { M_{v,i} } $, are the parameter artifacts to archive. Each matrix $m \in \cmstallmatrix$ is either stored directly, or is recovered through another matrix $m' \in \cmstallmatrix$ via a delta operator $\cmstdeltaop$, i.e. $m = m'\ \cmstdeltaop\ d$, where $d$ is the delta computed using one of the techniques discussed above. In the latter case, the matrix $d$ is stored instead of $m$. To unify the two cases, we introduce a empty matrix $\cmstroot$, and define $\forall \cmstdeltaop\forall m \in \cmstallmatrix, m\ \cmstdeltaop\ \cmstroot = m$.

\begin{definition}[Matrix Storage Graph] Given a repository of model versions $V$, let $\cmstroot$ be an empty matrix, and $\cmstvgvset = \cmstallmatrix \cup \{\cmstroot\}$ be the set of all parameter matrices. We denote by $\cmstvgeset = \{m_i\ \cmstdeltaop\ m_j \} \cup \{ m_i\ \cmstdeltaop\ \cmstroot\}$ the available deltas between all pairs of matrices. Abusing notation somewhat, we also treat $\cmstvgeset$ as the set of all {\em edges} in a graph where $\cmstvgvset$ are the vertices. Finally, let $\cmstvg$ denote the {\em matrix storage graph} of $V$, 
where edge weights $\cmstecs, \cmstecr: \cmstvgeset \mapsto \mathbb{R^+}$ are storage cost and recreation cost of an edge respectively.
\end{definition}

\begin{definition}[Matrix Storage Plan] Any connected subgraph of $\cmstvgsimp$ is called a matrix storage plan for $V$, and denoted by $\cmstplan$, where $\cmstvgvset_P = \cmstvgvset$ and $\cmstvgeset_P \subseteq \cmstvgeset$.
\end{definition}

\begin{example}
In Fig.~\ref{fig:storage_graph}, we show a matrix storage graph for a repository with two snapshots, $s_1 =\{m_1,m_2\}$ and $s_2 =\{m_3,m_4,m_5\}$. The weights associated with an edge $e = (\cmstroot, m_i)$ reflect the cost of materializing the matrix $m_i$ and retrieving it directly. On the other hand, for an edge between two matrices, e.g., $e = (m_2, m_5)$, the weights denote the storage cost of the corresponding delta and the recreation cost of applying that delta. In Fig.~\ref{fig:storage_plan_mst} and \ref{fig:storage_plan_constrained}, two matrix storage plans are shown. 
\end{example}

For a matrix storage plan $\cmstplan$, \weightstore\ stores all its edges and is able to recreate any matrix $m_i$ following a path starting from $\cmstroot$. The \underline{\emph{total storage cost}} of $\cmstplansimp$, denoted as $\cmstvcs(\cmstplansimp)$, is simply the sum of edge storage costs, i.e.
$\cmstvcs(\cmstplansimp) = \sum_{e\in\cmstvgeset_P} \cmstecs(e)$. 
Computation of the  \underline{\emph{average snapshot recreation cost}} is more involved and depends on the retreival scheme used:
\begin{list}{$\bullet$}{\leftmargin 0.10in \topsep -2pt} 
\item \emph{Independent} scheme recreates each matrix $m_i$ one by one by following the shortest path ($\cmstspath{m_i}$) to $m_i$ from $\cmstroot$. In that case, the recreation cost is simply computed by summing the recreation costs for all the edges along the shortest path.

\item \emph{Parallel} scheme accesses all matrices of a snapshot in parallel (using multiple threads); the longest shortest path from $\cmstroot$ defines the recreation cost for the snapshot. 
\item \emph{Reusable} scheme considers caching deltas on the way, i.e., if paths from $\cmstroot$ to two different matrices overlap, then the shared computation is only done once. In that case, we need to construct the lowest-cost {\em Steiner tree} ($\mathcal{T}_{\cmstplansimp, s_i}$) involving $\cmstroot$ and the matrices in the snapshot. However, because multiple large matrices need to be kept in memory simultaneously, the memory consumption of this scheme can be large.
\end{list}
\begin{table}[!h]
\centering
\begin{tabular}{cccc} 
\toprule
\emph{Retrieval Scheme} & \emph{Recreation} $\cmstvcr^{\cmstscheme{}}(\mathcal{P}_V, s_i)$ & \emph{Solution of Prob.1}
\\ 
\midrule
Independent ($\cmstscheme{i}$) & $\sum_{m_j \in s_i} \sum_{e_k \in \cmstspath{m_j}} \cmstecr(e_k) $ & Spanning tree \\ 
Parallel ($\cmstscheme{p}$) & $\max_{m_j \in s_i}\{\sum_{e_k \in \cmstspath{m_j}} \cmstecr(e_k)\}$ & Spanning tree  \\ 
Reusable ($\cmstscheme{r}$) & $\sum_{e_k \in \mathcal{T}_{\cmstplansimp, s_i}} \cmstecr(e_k)$ & Subgraph \\ 
\bottomrule
\vspace{0.1mm}
\end{tabular}
\caption{Recreation Cost of a Snapshot $s_i$ $\cmstvcr(\cmstplansimp,s_i)$ in a plan $\cmstplansimp$}
\label{tb:steiner_tree}
\end{table}

\weightstore\ can be configured to use any of these options during the actual query execution. 
However, solving the storage optimization problem with {\em Reusable} scheme is nearly impossible; since the Steiner tree problem is NP-Hard, just computing the cost of a solution becomes intractable making it hard to even compare two different storage solutions. Hence, during the storage optimization process, \weightstore\ can only support {\em Independent} or {\em Parallel} schemes. 


In the example above, the edges are shown as being undirected indicating that the deltas are symmetric. In general, we allow for directed deltas to handle asymmetric delta functions, and also for multiple directed edges between the same two matrices. The latter can be used to capture different options for storing the delta; e.g., we may have one edge corresponding to a remote storage option, where the storage cost is lower and the recreation cost is higher; whereas another edge (between the same two matrices) may correspond to a local SSD storage option, where the storage cost is the highest and the recreation cost is the lowest. Our algorithms can thus automatically choose the appropriate storage option for different deltas.


Similarly, \weightstore\ is able to make decisions at the level of byte segments of float matrices, by treating them as separate matrices that need to be retrieved together in some cases, and not in other cases. This, combined with the ability to incorporate different storage options, is a powerful generalization that allows \weightstore\ to make decisions at a very fine granularity. 


Given this notation, we can now state the problem formally. Since there are multiple optimization metrics, we assume that constraints on the retrieval costs are provided and ask to minimize the storage.


\begin{problem}[Optimal Parameter Archival Storage\eat{ Problem}]
\label{prob:pas_storage}
Given 
a matrix storage graph $\cmstvg$, let $\theta_i$ be the {\em snapshot recreation cost budget} for each $s_i \in S$. Under a retrieval scheme $\cmstscheme{}$, find a matrix storage plan $\cmstplansimp^*$ that minimizes the \emph{total storage cost}, while satisfying recreation constraints, i.e.:\\
{{\mbox{\ } \ $\minimize_{\cmstplansimp} \cmstvcs(\cmstplansimp); \quad \text{s.t.}\quad \forall s_i \in S, \cmstvcr^{\cmstscheme{}}(\cmstplansimp, s_i) \leq \theta_i$}}
\eat{
{\small{
\begin{equation*}
\minimize_{\cmstplansimp} \quad \cmstvcs(\cmstplansimp); \quad \text{s.t.}\quad \forall s_i \in S, \cmstvcr^{\cmstscheme{}}(\cmstplansimp, s_i) \leq \theta_i 
\end{equation*}
}}
}
\end{problem}

\begin{example}
In Fig.~\ref{fig:storage_plan_mst}, without any recreation constraints, we show the best storage plan, which is the minimum spanning tree based on $\cmstecs$ of the matrix storage graph, $\cmstvcs(\cmstplansimp) = 19$. Under independent scheme $\cmstscheme{i}$, $\cmstvcr^{\cmstscheme{i}}(\cmstplansimp, s_1) = 3$ and $\cmstvcr^{\cmstscheme{i}}(\cmstplansimp, s_2) = 7.5$. In Fig.~\ref{fig:storage_plan_constrained}, after adding two constraints $\theta_1 = 3$ and $\theta_2 = 6$, we shows an optimal storage plan $\cmstplansimp^*$ satisfying all constraints. The storage cost increases, $\cmstvcs(\cmstplansimp^*) = 24$, while $\cmstvcr^{\cmstscheme{i}}(\cmstplansimp^*, s_1) = 3$ and $\cmstvcr^{\cmstscheme{i}}(\cmstplansimp^*, s_2) = 6$. 
\end{example}

Although this problem variation might look similar to the ones considered in recent work~\cite{vldb15versioning}, none of the variations 
studied there can handle the co-usage constraints (i.e., the constraints on simultaneously retrieving a group of versioned data artifacts). One 
way to enforce such constraints is to treat the entire snapshot as a single data artifact that is stored together; however, that may force
us to use an overall suboptimal solution because we would not be able to choose the most appropriate delta at the level of individual matrices. 
Another option would be to sub-divide the retrieval budget for a snapshot into constraints on individual matrices in the snapshot. As our
experiments show, that can lead to significantly higher storage utilization. 
Thus the formulation above is a strict generalization of the formulations considered in that prior work. 

%

\begin{theorem}
Optimal Parameter Archival Storage Problem is NP-hard for all retrieval schemes in Table~\ref{tb:steiner_tree}.
\end{theorem}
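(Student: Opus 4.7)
The plan is to establish NP-hardness separately for each of the three retrieval schemes (Independent, Parallel, Reusable) by exhibiting reductions from known NP-hard problems. The unifying observation is that the problem strictly generalizes both the dataset-versioning problems of Bhattacherjee et al.~\cite{vldb15versioning} and classic tree-packing problems, so restricting our instances to appropriate special cases should recover a known hard problem.

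For the Independent and Parallel schemes, I would reduce from the corresponding constrained spanning-tree variants of the storage/retrieval tradeoff studied in~\cite{vldb15versioning}. Given an instance of their problem (a weighted version graph with per-version recreation budgets), build a matrix storage graph by treating each version as a snapshot containing exactly one matrix, preserving the edge storage and recreation costs. Under this $|M_{v,i}| = 1$ restriction, the co-usage constraint $\cmstvcr^{\cmstscheme{}}(\cmstplansimp, s_i) \leq \theta_i$ collapses to a single per-matrix budget, and both the Independent and Parallel recreation formulas in Table~\ref{tb:steiner_tree} collapse to the shortest-path cost from $\cmstroot$ to the unique matrix of $s_i$. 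Thus an optimal matrix storage plan is an optimal solution to their NP-hard problem, and any polynomial-time algorithm for our problem would solve theirs.

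For the Reusable scheme, I would reduce directly from the classical Steiner Tree problem. Given an instance $(G=(V_G,E_G), w, T\subseteq V_G, k)$, I would build a matrix storage graph whose vertex set is $V_G \cup \{\cmstroot\}$ with a single snapshot $s$ whose matrices are exactly $T$; edges mirror $E_G$ with recreation cost $w$ and an attached storage cost chosen so that minimizing total storage within the budget $\theta = k$ exactly corresponds to selecting a Steiner tree over $T \cup \{\cmstroot\}$ of cost at most $k$. Since in the Reusable scheme $\cmstvcr^{\cmstscheme{r}}(\cmstplansimp, s)$ is the weight of $\mathcal{T}_{\cmstplansimp, s}$, any feasible plan encodes a Steiner tree and vice versa, so deciding feasibility (and hence the optimization problem) is at least as hard as Steiner Tree.

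The main obstacle I expect is the Reusable-scheme reduction: the matrix storage graph semantics assume every vertex corresponds to a real matrix belonging to some snapshot, while Steiner vertices of $G$ serve only as routing points. I would handle this either by adding auxiliary snapshots with slack budgets so that Steiner vertices formally belong to some snapshot without interfering with the constructed instance, or by verifying that the Steiner tree formulation is unaffected by ``unused'' matrices in $\cmstvgvset$. A secondary subtlety is ensuring the delta cost model can realize arbitrary edge weights; since costs $\cmstecs, \cmstecr$ are abstract positive reals in the definition of $\cmstvg$, this is immediate. By contrast, the Independent/Parallel reductions are essentially free once the single-matrix-per-snapshot restriction is articulated, so the real technical work is concentrated in the Reusable case.
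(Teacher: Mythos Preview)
Your proposal is correct but differs from the paper's proof in two interesting ways.

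For the Independent and Parallel schemes, you restrict to singleton snapshots (one matrix each), so the co-usage constraint degenerates to a per-version recreation budget and both schemes collapse to the same shortest-path bound; you then inherit hardness from the per-version-budget variant in~\cite{vldb15versioning}. The paper instead goes the opposite way: it maps all datasets to matrices placed in a \emph{single} snapshot with one global bound $\Theta_g$. Under $\cmstscheme{i}$ this makes the snapshot recreation cost equal to the \emph{sum} of all shortest-path costs (recovering Problem~5 of~\cite{vldb15versioning}), and under $\cmstscheme{p}$ it becomes the \emph{max} (recovering Problem~6). Your construction is arguably more elementary since it does not exploit the group structure at all, while the paper's construction is nice because it shows that even a \emph{single} co-usage constraint already encodes the global-budget hardness of the earlier work, distinguishing the two schemes via the sum-vs-max distinction.

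For the Reusable scheme you reduce from Steiner Tree, which is the most natural choice given that $\cmstvcr^{\cmstscheme{r}}$ is literally defined via a Steiner tree, and you correctly identify that feasibility alone already embeds the decision problem (so the storage objective can be neutralized). The paper instead reduces from weighted Set Cover, mapping sets to edges with $\cmstecs$ as the set weight, items to vertices, and taking $\Theta_g=\infty$ so that only the storage objective matters. Your route requires the extra care you flag about non-terminal vertices living in $\cmstvgvset$; the paper's route sidesteps that by putting the hardness entirely in the storage side rather than in the recreation constraint.
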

\begin{proof}
We reduce Prob.5 in \cite{vldb15versioning} to the independent scheme $\cmstscheme{i}$, and Prob.6 in \cite{vldb15versioning} to the parallel scheme $\cmstscheme{p}$, by mapping each datasets as vertices in storage graph, and introducing a snapshot holding all matrices with recreation bound $\Theta_g$. For reuse scheme $\cmstscheme{r}$, it is at least as hard as weighted set cover problem if reducing a set to an edge $e$ with storage cost $\cmstecs(e)$ as weight, an item to an vertex in $\cmstvgsimp$, and set recreation budget $\Theta_g = \infty$.
\end{proof}

\begin{lemma}\label{lm:spanning_tree_solution}
The optimal solution for Problem~\ref{prob:pas_storage} is a spanning tree when retrieval scheme $\cmstscheme{}$ is \emph{independent} or \emph{parallel}.
\end{lemma}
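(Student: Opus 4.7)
The plan is to prove the lemma by an exchange/reduction argument: starting from any feasible storage plan $\cmstplansimp$ that is not a spanning tree, I would construct a spanning subgraph that is still feasible and has storage cost no larger than $\cmstplansimp$. Specifically, I would extract the \emph{shortest-path tree} from $\cmstroot$ inside $\cmstplansimp$, where distances use the recreation weights $\cmstecr$.

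The key observations are these. First, a matrix storage plan is by definition a connected subgraph containing $\cmstroot$ and all matrices in $\cmstvgvset$, so a shortest-path tree rooted at $\cmstroot$ exists and is a spanning tree $T_{\cmstplansimp}$ of $\cmstplansimp$ with exactly $|\cmstvgvset|-1$ edges. Second, by construction of $T_{\cmstplansimp}$, for every $m\in\cmstvgvset$ the distance $d_{T_{\cmstplansimp}}(\cmstroot,m)$ equals the distance $d_{\cmstplansimp}(\cmstroot,m)$ in $\cmstplansimp$, i.e.\ the shortest path $\cmstspath{m}$ used by the \emph{independent} scheme and the maximum such path used by the \emph{parallel} scheme (Table~\ref{tb:steiner_tree}) are preserved. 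Hence for every snapshot $s_i$, $\cmstvcr^{\cmstscheme{i}}(T_{\cmstplansimp},s_i)=\cmstvcr^{\cmstscheme{i}}(\cmstplansimp,s_i)$ and $\cmstvcr^{\cmstscheme{p}}(T_{\cmstplansimp},s_i)=\cmstvcr^{\cmstscheme{p}}(\cmstplansimp,s_i)$, so all recreation budgets $\theta_i$ remain satisfied.

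Third, because edge storage weights $\cmstecs$ are non-negative (in fact in $\mathbb{R}^+$), dropping any edge of $\cmstplansimp$ that is not in $T_{\cmstplansimp}$ can only decrease the total storage cost: $\cmstvcs(T_{\cmstplansimp})\leq \cmstvcs(\cmstplansimp)$, with strict inequality whenever $\cmstplansimp$ contains a cycle. Combining these facts, any feasible plan can be replaced by a feasible spanning tree of no greater storage cost, so an optimizer of Problem~\ref{prob:pas_storage} can always be chosen to be a spanning tree.

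The only subtle point I anticipate is that one cannot simply delete an arbitrary ``extra'' edge from a cycle in $\cmstplansimp$: doing so may lengthen some shortest path from $\cmstroot$ and thereby violate a constraint $\theta_i$. Using the shortest-path tree sidesteps this exactly because the distances from $\cmstroot$ are the invariants of both cost expressions in Table~\ref{tb:steiner_tree} for the independent and parallel schemes. I would also remark why the same argument fails for the \emph{reusable} scheme: there, the recreation cost is the weight of a Steiner tree over $\{\cmstroot\}\cup M_{v,i}$ within $\cmstplansimp$, and restricting $\cmstplansimp$ to a spanning tree can remove edges that were shared across snapshot Steiner trees, potentially inflating $\cmstvcr^{\cmstscheme{r}}(\cdot,s_i)$ and breaking feasibility—explaining why the lemma is stated only for the independent and parallel schemes.
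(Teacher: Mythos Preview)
Your proposal is correct and follows essentially the same argument as the paper: take any feasible non-tree plan, replace it by the shortest-path tree from $\cmstroot$ (with respect to $\cmstecr$), observe that the recreation costs under the independent and parallel schemes depend only on these shortest-path distances and are therefore preserved, and note that dropping the remaining edges strictly lowers the storage cost since $\cmstecs\in\mathbb{R}^+$. Your write-up is in fact more careful than the paper's (which simply says ``the union of each shortest path forms a shortest path tree''), and your remark on why the argument breaks for the reusable scheme matches the paper's subsequent discussion.
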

\begin{proof}
Suppose we have a non-tree solution $\cmstplansimp$ satisfying the constraints, and also minimize the objective. Note that parallel and independent schemes are based on shortest path $\cmstspath{m}$ in $\cmstplansimp$ from $\cmstroot$ to each matrix $m$, so the union of each shortest path forms a shortest path tree. If we remove edges which are not in the shortest path tree from the plan to $\cmstplansimp'$, it results in a lower objective $\cmstvcs(\cmstplansimp')$, but still satisfying all recreation constraints, which leads to a contradiction.
\end{proof}
Lemma~\ref{lm:spanning_tree_solution} shows $\cmstplansimp^*$ is a spanning tree and connects our problem to a class of constrained minimum spanning tree problems. 
The above lemma is not true for the \emph{reusable} scheme ($\cmstscheme{r})$; snapshot Steiner trees satisfying different recreation constraints may share intermediate nodes resulting in a subgraph solution.


\topic{Constrained Spanning Tree Problem}:
In Problem~\ref{prob:pas_storage}, storage cost minimization while ignoring the recreation constraints 
leads to a minimum spanning tree (MST) of the storage matrix; whereas the snapshot recreation
constraints are best satisfied by using a shortest path tree (SPT). 
These problems are often referred to as
constrained spanning tree problems~\cite{deo1997computation} or shallow-light
tree constructions~\cite{vlsi1994optimal}, which have been studied in areas
other than dataset versioning, such as VLSI designs. 
Khuller et al.~\cite{khuller1995balancing} propose an algorithm called LAST to construct 
such a ``balanced'' spanning tree in an undirected graph $G$. LAST starts with a minimum spanning tree of the provided graph, 
traverses it in a DFS manner, and adjusts the tree by changing parents to ensure the path length in constructed
solution is within (1+$\epsilon$) times of shortest path in $G$, i.e.
$\cmstvcr(T,v_i) \leq (1+\epsilon)\cmstvcr(\cmstspath{v_i}, v_i)$, while total
storage cost is within (1+$\frac{2}{\epsilon}$) times of MST. In our problem,
the co-usage constraints of matrices in each snapshot form hyperedges over the
graph making the problem more difficult. 

In the rest of the discussion, we adapt meta-heuristics for constrained MST
problems to develop two algorithms: the first one (\weightstore-MT) is based on
an iterative refinement scheme, where we start from an MST and then adjust it to
satisfy constraints\eat{, similar to the LMT algorithm proposed in early work
\cite{vldb15versioning}}; the second one is a priority-based tree construction
algorithm (\weightstore-PT), which adds nodes one by one and encodes heuristic
in the priority function. Both algorithms aim to solve the parallel and
independent recreation schemes, and thus can also find feasible solution for reusable
scheme. Due to large memory footprints of intermediate matrices, we leave
improving reusable scheme solutions for future work.

\begin{algorithm}[t!] 
\caption{\weightstore-MT}
\label{alg:pas_mt}
{\small{
\begin{algorithmic}[1]
\REQUIRE $\cmstvg$, snapshots $S$, recreation cost $\{\theta_i \geq 0\ |\ s_i \in S\}$.
\ENSURE A spanning tree $T$ satisfying constraints $\{\cmstvcr(T, s_i) \leq \theta_i\}$
\STATE let $T = $ MST of $\cmstvgsimp$; 
\WHILE {unsatisfied constraints $U = \{s_i\ |\ \cmstvcr(T, s_i) > \theta_i\} \neq \emptyset$}
\FOR{each edge $e_{si} = (v_s, v_i) \in \cmstvgeset - T$}
  \STATE calculate $gain(e_{si})$  with Eq.~\ref{eq:marginal} (Eq.~\ref{eq:marginal_pa} for scheme $\cmstscheme{p}$)
\ENDFOR
\STATE find $e'_{si} = \max\{e_{si}\ |\ gain(e_{si}) \}$
\STATE \textbf{break} if $gain(e'_{si}) \leq 0$
\STATE \emph{swap} $(p_i, v_i)$ with $e'_{si}$: $T = (T - \{(p_i, v_i)\} ) \cup \{e'_{si}\}$
\ENDWHILE
\RETURN $T$ unless $U \neq \emptyset$
\end{algorithmic}
}}
\end{algorithm}

\topic{\weightstore-MT:} The algorithm starts with $T$ as the MST of $\cmstvgsimp$, and iteratively adjusts $T$ to satisfy the broken snapshot recreation 
constraints, $U = \{s_i | \cmstvcr(T, s_i) > \theta_i\}$, by swapping one edge at a time. 
We denote $p_i$ as the parent of $v_i$, $(p_i, v_i) \in T$ and $p_0 = \phi$, and successors of $v_i$ in $T$ as $\mathcal{D}_i$. A \underline{\emph{swap operation}} on $(p_i, v_i)$ to edge $(v_s, v_i) \in \cmstvgeset - T$ changes parent of $v_i$ to $v_s$ in $T$. 
\begin{lemma}
\label{lm:swap}
A swap operation on $v_i$ changes storage cost of $\cmstvcs(T)$ by $\cmstecs(p_i, v_i) - \cmstecs(v_s, v_i)$, and changes recreation costs of $v_i$ and its successors $ \mathcal{D}_i$ by: $\cmstvcr(T, v_i) - \cmstvcr(T, v_s) - \cmstecr(v_s, v_i)$. 
\end{lemma}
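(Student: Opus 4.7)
The plan is to analyze the swap $(p_i, v_i) \mapsto (v_s, v_i)$ by first observing that it replaces exactly one edge of $T$ and leaves the subtree rooted at $v_i$ structurally intact, provided $v_s \notin \{v_i\} \cup \mathcal{D}_i$. This side condition is precisely what guarantees that $T' = (T - \{(p_i, v_i)\}) \cup \{(v_s, v_i)\}$ is again a spanning tree (otherwise we would close a cycle inside the subtree and disconnect it from $\cmstroot$). Under this assumption, every vertex outside $\{v_i\} \cup \mathcal{D}_i$ still reaches $\cmstroot$ along the same path it used in $T$, so its recreation cost is unchanged and does not contribute to the accounting.

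The storage claim is then immediate: $T$ and $T'$ differ in exactly one edge, so
\[
\cmstvcs(T') - \cmstvcs(T) = \cmstecs(v_s, v_i) - \cmstecs(p_i, v_i),
\]
and reporting the change as a \emph{saving} flips the sign to $\cmstecs(p_i, v_i) - \cmstecs(v_s, v_i)$ as stated.

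For the recreation claim, I would decompose, for each $v \in \{v_i\} \cup \mathcal{D}_i$, the $\cmstroot$-to-$v$ path in $T'$ into three segments: the unchanged path from $\cmstroot$ to $v_s$ in $T$ of cost $\cmstvcr(T, v_s)$, the new edge of cost $\cmstecr(v_s, v_i)$, and the path from $v_i$ down to $v$ inside the preserved subtree, whose cost equals $\cmstvcr(T, v) - \cmstvcr(T, v_i)$ because the internal topology and edges of the subtree are identical in $T$ and $T'$. Summing these and subtracting the old cost $\cmstvcr(T, v)$ leaves a per-vertex change of $\cmstvcr(T, v_s) + \cmstecr(v_s, v_i) - \cmstvcr(T, v_i)$, independent of the particular $v$, which matches the expression in the lemma up to the same ``change as saving'' sign convention already used for storage (and consistent with the $gain(\cdot)$ function in Alg.~\ref{alg:pas_mt}).

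The only step that requires any care is the rigid-translation claim for the subtree below $v_i$. I would justify it by noting that deleting $(p_i, v_i)$ from $T$ splits $T$ into two components, one containing $\cmstroot$ and one rooted at $v_i$ consisting of $\{v_i\} \cup \mathcal{D}_i$; adding the new edge $(v_s, v_i)$ with $v_s$ in the first component reattaches the second component solely at $v_i$, so all intra-subtree edges (hence the intra-subtree distances used in the decomposition) are preserved verbatim. I do not anticipate any deeper obstacle; once this structural observation is in hand, the rest is essentially bookkeeping.
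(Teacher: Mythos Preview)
Your proposal is correct and follows the same approach the paper intends: the paper's own justification is simply ``The proof can be derived from definition of $C_s$ and $C_r$ by inspection,'' and what you have written is precisely that derivation spelled out, including the necessary side condition $v_s \notin \{v_i\}\cup\mathcal{D}_i$ and the sign-convention clarification, both of which the paper leaves implicit.
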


The proof can be derived from definition of $C_s$ and $C_r$ by inspection. When selecting edges in $\cmstvgeset - T$, we choose the one which has the largest marginal gain for unsatisfied constraints:

{\small{
\begin{align}
\label{eq:marginal}
\cmstscheme{i}:& \max_{(v_s, v_i) \in \cmstvgeset - T}
\{
\frac{\sum_{s_k \in U} \sum_{v_j \in s_k \cap \mathcal{D}_i}{(\cmstvcr(T, v_i) - \cmstvcr(T, v_s) - \cmstecr(v_s, v_i))}}{\cmstecs(v_s, v_i) - \cmstecs(p_i, v_i)}
\} & \\
\cmstscheme{p}:& \max_{(v_s, v_i) \in \cmstvgeset - T}
\{
\frac{\sum_{s_k \in U}{(\cmstvcr(T, v_i) - \cmstvcr(T, v_s) - \cmstecr(v_s, v_i))}}{\cmstecs(v_s, v_i) - \cmstecs(p_i, v_i)}
\} &
\label{eq:marginal_pa}
\end{align}
}}

The actual formula used is somewhat more complex, and handles negative denominators.
Eq.~\ref{eq:marginal} sums the gain of recreation cost changes among all matrices in the same snapshot $s_i$ (for the independent scheme), while Eq.~\ref{eq:marginal_pa} uses the max change instead (for the {\em parallel} scheme).

The algorithm iteratively swaps edges and stops if all recreation constraints are satisfied or no edge returns a positive gain. A single step examines $|\cmstvgeset\ - T|$ edges and $|U|$ unsatisfied constraints, and there are at most $|\cmstvgeset|$ steps. Thus the complexity is bounded by $O(|\cmstvgeset|^2|S|)$.

\begin{algorithm}[t!] 
\caption{\weightstore-PT}
\label{alg:pas_pt}
{\small{
\begin{algorithmic}[1]
\REQUIRE $\cmstvg$, snapshots $S$, recreation cost $\{\theta_i \geq 0\ |\ s_i \in S\}$.
\ENSURE A spanning tree $T$ satisfying constraints $\{\cmstvcr(T, s_i) \leq \theta_i\}$
\STATE let $T = \emptyset$ and $Q$ be a priority queue of edges based on $\cmstecs$
\STATE \textbf{push} $\{(\cmstroot, v_i)\ |\ v_i \in \cmstvgvset\} $ in $Q$
\WHILE{$Q \neq \emptyset$}
\STATE \textbf{pop} $e_{ij} = (v_i, v_j)$ from $Q$; let $T = T \cup \{e_{ij}\}$
\STATE let constraints satisfaction flag be $\Theta^{e_{ij}}_{satisfy} = true$
\FOR{each snapshot constraint $s_a \in \{s\ |\ s \in S \wedge v_j \in s\}$}
\STATE estimate recreation cost $\hat\cmstvcr(T, s_a)$
\STATE $\Theta^{e_{ij}}_{satisfy} = false$ and \textbf{break} if $\hat\cmstvcr(T, s_a) > \theta_a$
\ENDFOR
\STATE \textbf{if} $\Theta^{e_{ij}}_{satisfy}$ is $false$, \textbf{then} $T = T - \{e_{ij}\}$ and \textbf{goto} line 3
\STATE \textbf{pop} inner edges of $v_j$ $\mathcal{I}^j_T = \{(v_k, v_j)\ |\ v_k \in T\}$ from $Q$
\STATE \textbf{push} outer edges $\mathcal{O}^j_{\cmstvgeset - T} = \{(v_j, v_k)\ |\ v_k \in \cmstvgeset - T\}$ to $Q$
\FOR {$(v_k, v_j) \in T$, change $p_k$ improves $\cmstvcs$, and no worse $\cmstvcr$}
\STATE \emph{swap} $(p_k, v_k) \in T$ with $(v_j, v_k)$
\ENDFOR
\ENDWHILE
\IF{$T$ is not a spanning tree}
\STATE{\textbf{for each} $v_u \in \cmstvgvset - V_T$, \textbf{do} $T = T \cup \{e_{0u} = (\cmstroot, v_u)\}$}
\STATE \emph{adjust} $T$ using \weightstore-MT heuristic. 
\ENDIF
\RETURN $T$ if $T$ is a \emph{matrix storage plan}
\end{algorithmic}
}}
\end{algorithm}

\topic{\weightstore-PT:}
This algorithm constructs a solution by ``growing'' a tree starting with an empty tree.
The algorithm examines the edges in $\cmstvgsimp$ in the increasing order by the storage cost $\cmstecs$; a priority
queue is used to maintain all the candidate edges and is populated with all the edges from $v_0$ in the beginning. 
At any point, the edges in $Q$ are the ones that connect a vertex $T$, to a vertex outside $T$.
Using an edge $e_{ij} = (v_i, v_j)$ (s.t., $v_i \in V_T \wedge v_j \in \cmstvgvset-V_T$) popped from $Q$, the algorithm
tries to add $v_j$ to $T$ with minimum storage increment $\cmstecs(e_{ij})$.
Before adding $v_j$, it examines whether the constraints of affected groups $s_a$ (s.t., $v_j \in s_a$) are satisfied 
using actual and estimated recreation costs
for vertices $\{v_k \in s_a\}$ in $T$ and $\cmstvgvset-T$ respectively; if $v_k
\in T$, actual recreation cost $\cmstvcr(T, v_k)$ is used, otherwise the lower
bound of it, i.e. $\cmstecr(\cmstroot, v_k)$ is used as an estimation. We refer
the estimation for $s_a$ as $\hat\cmstvcr(T, s_a)$. 

Once an edge $e_{ij}$ is added to $T$, the inner edges $\mathcal{I}^j_T = \{(v_k, v_j) | v_k \in T\}$ of newly added $v_j$ are dequeued from $Q$, while the outer edges $\mathcal{O}^j_{\cmstvgeset - T} = \{(v_j, v_k)\ |\ v_k \in \cmstvgeset - T\}$ are enqueued. If the storage cost of existing vertices in $T$ can be improved (i.e., $\cmstvcs(T, v_k) > \cmstecs(v_k, v_j)$), and recreation cost is not more (i.e. $\cmstvcr(T, v_k) \geq \cmstvcr(T, v_j) + \cmstecr(v_k,v_j)$), then the parent $p_k$ of $v_k$ in T is replaced to $v_j$ via the swap operation, which obviously decreases the storage cost and affected group recreation cost. 

The algorithm stops if $Q$ is empty and $T$ is a spanning tree. In the case when $Q$ is empty but $V_T \subset \cmstvgvset$, an \underline{\emph{adjustment operation}} on $T$ to increase storage cost and satisfy the group recreation constraints is performed. 
For each $v_u \in \cmstvgvset-V_T$, we append it to $\cmstroot$, then in each unsatisfied group $s_i$ that $v_u$ belongs to, optimally, we want to choose a set of $\{v_g\} \subseteq s_i \cap T$ to change their parents in $T$, such that the decrement of storage cost is minimized while recreation cost is satisfied. The optimal adjustment itself can be viewed as a knapsack problem with extra non-cyclic constraint of $T$, which is NP-hard. Instead, we use the same heuristic in Eq.~\ref{eq:marginal} to adjust $v_g \in s_i \cap T$ one by one by replacing its parent $p_g$ to $v_s$ until the group constraint in $s_i$ is satisfied.
\eat{
{\small{
\begin{equation*}
\max_{v_s \in T \wedge v_s \neq p_g}
\{
\frac{\delta^{p_g \rightarrow v_s}_r}{\delta^{p_g \rightarrow v_s}_s} = 
\frac{(d_g+1)(\mathcal{C}_r(T, v_s) + c_r(v_s, v_g) - \mathcal{C}_r(T, v_g))}{c_s(v_s, v_g) - \mathcal{C}_s(T, v_g)}
\}
\end{equation*}
}}
where $d_g$ is number of successors of $v_g$ in $T$. 
}\eat{
In other words, we choose a neighbor $v_s$ to replace $p_g$, having the maximum marginal gain of recreation in the unsatisfied groups w.r.t. the storage increment.} As before, the parallel scheme $\cmstscheme{p}$ differs from independent case $\cmstscheme{i}$ in the adjustment operator using Eq.~\ref{eq:marginal_pa}. The complexity of this algorithm is $O(|\cmstvgeset|^2|S|)$.

\subsection{Model Evaluation Scheme in {\weightstore}}
\label{subsec:pas_evaluation}
Model evaluation, i.e., applying a \dnn\ forward on a data point to get the prediction result, is a common task to explore, debug and understand models. 
Given a \weightstore\ storage plan, an \dlvcmd{eval} query requires uncompressing and applying deltas along the path to the model. We develop 
a novel model evaluation scheme utilizing the segmented design, that progressively accesses the low-order segments only when necessary, and 
guarantees no errors for arbitrary data points.

The basic intuition is that: when retrieving segmented parameters, we know the minimum and maximum values of the parameters (since higher order bytes are retrieved first). 
If the prediction result is the same for the entire range of those values, then we do not need to access the lower order bytes. 
However, considering the high dimensions of parameters, non-linearity of the \dnn\ model, unknown full precision value when issuing the query, it is not clear if this is feasible.

We define the problem formally, and illustrate the determinism condition that we use to develop
our algorithm. 
Our technique is
inspired from theoretical stability analysis in numerical analysis. We make the
formulation  general to be applicable to other prediction functions. The
basic assumption is that the prediction function returns a vector showing relative
strengths of the classification labels, then the dimension index with the maximum
value is used as the predicted label. 

\begin{problem}[Parameter Perturbation Error Determination]
Given a prediction function $\mathcal{F}(d, W): \mathbb{R}^m \times \mathbb{R}^n \mapsto \mathbb{R}^c$, where $d$ is the data and $W$ are the learned weights, the prediction result $c_d$ is the dimension index with the highest value in the output $o \in \mathbb{R}^c$. When $W$ value is uncertain, i.e., each $w_i \in W$ in known to be in the range $[w_{i,\min }, w_{i, \max}]$, determine whether $c_d$ can be ascertained without error.
\end{problem}

When $W$ is uncertain, the output $o$ is uncertain as well. However, if we can bound the individual entries in $o$, then the following condition is an applicable necessary condition for determining error:

\begin{lemma}
\label{lm:error_determinism}
Let $o_i \in o$ vary in range $[o_{i,\min}, o_{i,\max}]$. If $\exists k$ such that $\forall i,\ o_{k,\min} > o_{i,\max}$, then prediction result $c_d$ is $k$.
\end{lemma}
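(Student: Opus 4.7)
The claim is essentially a direct consequence of how the predicted label is defined as $c_d = \arg\max_i o_i$, combined with an interval-dominance argument. The plan is to unfold the definitions and show that under the stated hypothesis, the $k$-th coordinate of the (uncertain) output vector strictly dominates every other coordinate in all realizations consistent with the parameter uncertainty.

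First, I would fix any realization of $W$ with $w_i \in [w_{i,\min}, w_{i,\max}]$ and consider the resulting output vector $o \in \mathbb{R}^c$. By the premise that each $o_i$ lies in $[o_{i,\min}, o_{i,\max}]$, we have the two-sided bounds $o_{i,\min} \le o_i \le o_{i,\max}$ for every coordinate $i$, and in particular $o_k \ge o_{k,\min}$. Next, I would invoke the hypothesis $o_{k,\min} > o_{i,\max}$ for every $i \ne k$, chain it with $o_i \le o_{i,\max}$, and conclude $o_k \ge o_{k,\min} > o_{i,\max} \ge o_i$ for all $i \ne k$. Since this holds for every admissible realization of $W$, the argmax coordinate is $k$ regardless of the uncertainty, i.e., $c_d = k$.

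There is no real obstacle here; the statement is a short sufficient-condition observation, not an approximation or bounding theorem. The only thing worth being careful about is keeping the strict inequality: it guarantees a unique argmax and hence an unambiguous predicted label, so ties never arise and no tie-breaking rule is needed. I would note explicitly that the condition is only sufficient, not necessary, which motivates why the progressive evaluation algorithm can still be forced to read lower-order bytes when the bounds from high-order bytes alone are too loose to separate the top coordinate from the rest.
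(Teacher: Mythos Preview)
Your argument is correct: the interval-dominance chain $o_k \ge o_{k,\min} > o_{i,\max} \ge o_i$ for all $i \ne k$ is exactly the intended reasoning, and your care about the strict inequality and the sufficiency-only remark are appropriate. The paper itself does not give an explicit proof of this lemma (it is stated and then immediately used), so your write-up is simply a spelled-out version of the obvious verification the authors left implicit.
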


Next we illustrate a query procedure, that given data $d$, evaluates a \dnn\ with weight perturbations and determines the output perturbation on the fly. Recall that \dnn\ is a nested function (Sec.~\ref{sec:preliminary}), \eat{where the input data is transformed layer by layer, and each layer is of the form:
\begin{displaymath}
\begin{aligned}
       & f_0 = \sigma_0(W_0 d + b_0) &\quad d \in \mathbb D \\
       & f_i = \sigma_i(W_i f_{i-1} + b_i) &\quad 0 < i \leq n \\
\end{aligned}
\end{displaymath}
}
we derive the output \eat{$o = f_n$ }perturbations when evaluating a model while preserving perturbations step by step:

{\small{
\begin{displaymath}
\begin{aligned}
x_{0,k} &= \sum_j W_{0,k,j} d_j + b_{0,k} & 
x_{0,k,\min} &= \sum_j \min\{W_{0,k,j} d_j\} + \min\{ b_{0,k} \} &\\
& & x_{0,k,\max} &= \sum_j \max\{W_{0,k,j} d_j\} + \max\{ b_{0,k} \} & \\
\end{aligned}
\eat{
x_{0,k,\min} = \sum_j \min\{W_{0,k,j} d_j\} + \min\{ b_{0,k} \}; 
x_{0,k,\max} = \sum_j \max\{W_{0,k,j} d_j\} + \max\{ b_{0,k} \} 
}
\end{displaymath}
}}

{\noindent}Next, activation function $\sigma_0$ is applied. Most of the common activation functions are monotonic function $\mathbb{R}\mapsto\mathbb{R}$, (e.g. sigmoid, ReLu), while pool layer functions are $\min$, $\max$, avg functions over several dimensions. It is easy to derive the perturbation of output of the activation function, $[f_{0,k,\min}, f_{0,k,\max}]$. During the evaluation query, instead of 1-D actual output, we carry 2-D perturbations, as the actual parameter value is not available. Nonlinearity decreases or increases the perturbation range. Now the output perturbation at $f_i$ can be calculated similarly, except now both $W$ and $f_{i-1}$ are uncertain:

{\small{
\begin{displaymath}
\begin{aligned}
x_{i,k} &= \sum_j W_{i,k,j} f_{i-1,j} + b_{i,k} & 
x_{i,k,\min} &= \sum_j \min\{W_{i,k,j} f_{i-1,j}\} + \min\{ b_{i,k} \} &\\
& & x_{i,k,\max} &= \sum_j \max\{W_{i,k,j} f_{i-1,j}\} + \max\{ b_{i,k} \} & \\
\end{aligned}
\eat{
x_{i,k,\min} = \sum_j \min\{W_{i,k,j} f_{i-1,j}\} + \min\{ b_{i,k} \}; 
x_{i,k,\max} = \sum_j \max\{W_{i,k,j} f_{i-1,j}\} + \max\{ b_{i,k} \} 
}
\end{displaymath}
}}

{\noindent}Applying these steps iteratively until last layer,  we can then apply Lemma~\ref{lm:error_determinism}, the condition of error determinism, to check if the result is correct. If not, then lower order segments of the float matrices are retrieved, and the evaluation is re-performed.

This progressive evaluation query techniques dramatically improve the utility of \weightstore, as we further illustrate in our experimental evaluation. 
Note that, other types of queries, e.g., matrix plots, activation plots, visualizations, etc., can often be executed without retrieving the lower-order bytes either.

\begin{figure*}[!t]
\subfigure[Compression-Accuracy Tradeoff\newline for Float Representation Schemes]{
  \includegraphics[totalheight=0.165\linewidth]{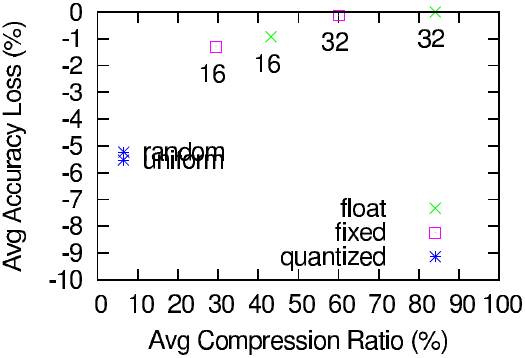} 
  \label{fig:exp1}
}\subfigure[Compression Performance for\newline Different Delta Schemes \& Models]{
  \includegraphics[totalheight=0.165\linewidth]{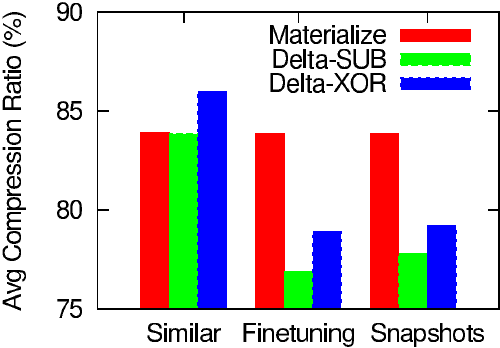} 
  \label{fig:exp2}
}\subfigure[\weightstore\ Optimal Archival Storage\newline Algorithms Results for \syntheticds ]{
  \includegraphics[totalheight=0.165\linewidth]{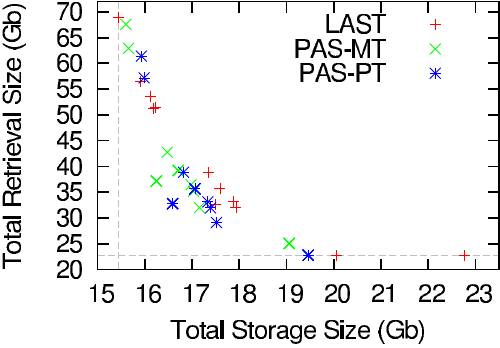} 
  \label{fig:exp3}
}\subfigure[Progressive Evaluation Query\newline Processing Using High-Order Bytes]{
  \includegraphics[totalheight=0.165\linewidth]{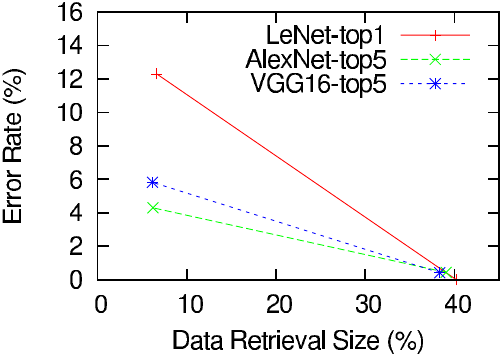}
  \label{fig:exp4}
}
\caption{Evaluation Results for \weightstore}
\label{fig:paseval}
\end{figure*}

\eat{
\begin{figure}[!t]
\centering
\includegraphics[width=0.9\linewidth]{comp_acc_tradeoff_compact.eps}
\caption{Compression-Accuracy Tradeoff in Float Representation Schemes}
\label{fig:exp1}
\end{figure}
}

\section{Evaluation Study}
\label{sec:experiments}
\modelhub\ is designed to work with a variety of deep learning backends; our
current prototype interfaces with \cmd{caffe}~\cite{caffe2014mm} through a wrapper that can
extract caffe training logs, and read and write parameters for training. We have
also built a custom layer in \cmd{caffe} to support progressive queries. 
The \dlv\ command-line suite is implemented as a Ruby
gem, utilizing \cmd{git} as internal VCS and \cmd{sqlite3} and \weightstore\ as
backends to manage the set of heterogeneous artifacts in the local client.
\weightstore\ is built in C++ with gcc 5.4.0. All experiments are conducted on a
Ubuntu Linux 16.04 machine with an 8-core 3.0GHz AMD FX-380 processor, 16GB
memory,
and NVIDIA GTX 370 GPU. We use zlib for compression; unless
specifically mentioned, the compression level is set to 6. When wrapping
and modifying caffe, the code base version is rc3\eat{ \footnote{\cmd{caffe}:
{\small{\url{https://github.com/BVLC/caffe/releases/tag/rc3}}}}}.

In this section, we present a comprehensive evaluation with real-world and
synthetic datasets aimed at examining our design decisions, differences of
configurations in \weightstore, and performance of archiving and progressive
query evaluation techniques proposed in earlier sections. 

\subsection{Dataset Description}
\topic{Real World Dataset}: 
To study the performance of \weightstore\ design decisions, float
representations, segmented float storage, delta encoding and overall
compression performance, we use a collection of shared caffe models 
published in caffe repository or Model Zoo\eat{ (Table~\ref{tb:realmodel})}. In brief,
LeNet-5~\cite{lenet90} is a convolutional DNN with 431k parameters. The reference model has
0.88\% error rate on MNIST. AlexNet~\cite{alexnet2012imagenet} is a medium-sized model with 61 million
parameters, while VGG-16~\cite{vgg14} has 1.9 billion parameters. Both AlexNet and VGG-16
are tested on ILSVRC-2012 dataset. The downloaded models have 43.1\%, and
31.6\% top-1 error rate respectively. 
Besides, to study the delta performance on model repositories under different workloads (i.e., retraining, fine-tuning): we use CNN-S/M/F~\cite{vggsmfbmvc14}, a set of similar models developed by VGG authors to study model variations. These models are similar to VGG in architecture, and retrained from scratch; for fine-tuning, we use VGG-Salient~\cite{zhang2015salient} a fine-tuning VGG model which only changes last full layer. 

\eat{
\begin{table}[!h]
\centering
\begin{tabular}{cccc} 
\toprule
\textbf{Network}  & \textbf{$|W|$ (flops)} & \huicomment{}Size &  Purpose 
\\ 
\midrule
LeNet-5 & $4.31 \times 10^{5}$ & 5MB & Small model\\ 
AlexNet & $6\times10^7$ & 200MB & Medium model \\ 
VGG-16 \cite{vgg14} & $1.96 \times 10^{10}$ & 500MB & Large model \\ 
CNN-S/M/F \cite{vggsmfbmvc14} & $1.13\times10^{10}$ & 199/300MB & Similar models \\ 
VGG-Salient\huicomment{} & $1.96 \times 10^{10}$ & 500MB & Fine-tuning \\
\bottomrule
\end{tabular}
\caption{Real world \dnn\ models}
\label{tb:realmodel}
\end{table}
}

\topic{Synthetic Datasets:} 
Lacking sufficiently fine-grained real-world repositories of models, 
to evaluate \eat{optimality and scalability}performance of parameter archiving algorithms, we developed an automatic modeler \eat{using \DQL} to enumerate models and hyperparameters to produce a \dlv\ repository. \eat{We generated two synthetic datasets (SD1, SD2); SD1 mimics a modeler who is enumerating models to solve a face recognition task, while SD2 simulates a modeler fine-tuning a trained network for a new prediction task. SD1 exhibits higher variety in the network architectures and parameters, whereas the fine-tuning practice used in SD2 results in similar network architectures across the models and relatively similar parameters.}
We generated a synthetic dataset (\syntheticds): simulating a modeler who is enumerating models to solve a face recognition task, and fine-tuning a trained VGG. \syntheticds\ results in similar DNNs and relatively similar parameters across the models.
The datasets are shared online\footnote{Dataset Details: {\small{\url{http://www.cs.umd.edu/~hui/code/modelhub}}}}. 

To elaborate, the automation is driven by a state machine that applies modeling practices from the real world. \eat{For SD1, the modeler mutates the network architecture intensively by inserting/deleting layers and changing layer shapes, as well as by updating optimization related hyperparameters. 
In the second task, }The modeler updates the VGG network architecture slightly and changes VGG object recognition goal to a face prediction task (prediction labels changed from 1000 to 100, so the last layer is changed); \eat{in this scenario,} various fine-tuning hyperparameter alternations are applied by mimicking practice~\cite{fastrnn}. \eat{Details of these generation and \DQL\ query used are in Appendix.} \eat{Table~\ref{tb:syntheticmodels} summarizes the two datasets. } \syntheticds\ in total has 54 model versions, each of which have 10 snapshots. A snapshot has 16 parametric layers and a total of $1.96\times10^{10}$ floats.

\eat{
\begin{table}[!h]
{\small{
\centering
\begin{tabular}{ccccc} 
\toprule
\textbf{Network}  & \textbf{Versions} & \textbf{Snapshots} & \textbf{Matrices} & \textbf{Type}
\\ 
\midrule
SD1  & 50 & 10 x 50 & [a, b] & Retraining\\ 
\midrule
SD2  & 54 & 10 x 54 & 16 & Fine-tuning\\ 
\bottomrule
\end{tabular}
}}
\caption{Synthetic \dnn\ Model Versions}
\label{tb:syntheticmodels}
\end{table}

Note SD1 is smaller than SD2, as SD1 requires end-to-end training which takes weeks on limited hardware resource. We use a medium size network but smaller images to lowering required parameters. 
}

\subsection{Evaluation Results}

\topic{Float Representation \& Accuracy:}
We show the effect of different float encoding schemes on the compression and accuracy in Fig.~\ref{fig:exp1}; this is a tradeoff that the user often needs to consider when configuring \modelhub\ to save a model. In Fig.~\ref{fig:exp1}, for each scheme, we  plot the average compression ratio versus the average accuracy drop when applying \weightstore\ float schemes on the three real world models. Here, {\em random} and {\em uniform} denote two standard quantization schemes. As we can see, we can get very high compression ratios (a factor of 20 or so) without a significant loss in accuracy, which may be acceptable in many scenarios.


\topic{Delta Encoding \& Compression Ratio Gain:}
Next we study the usefulness of delta encoding in real world models in the following scenarios: \textbf{a)} \emph{Similar}: latest snapshots across similar models (CNN-S/M/F, VGG-16); \textbf{b)} \emph{Fine-tuning}: fine-tuning models (VGG-16, VGG-Salient); and \textbf{c)} \emph{Snapshots}: snapshots for the same VGG models in \syntheticds\ between iterations. 
In Fig.~\ref{fig:exp2}, for different delta schemes, namely, storing original matrices (\emph{Materialize}), arithmetic subtraction (\emph{Delta-SUB}), and bitwise XOR diff (\emph{Delta-XOR}), the comparison is shown (i.e., we show the results of compressing the resulting matrices using {\bf zlib}). 
The figure shows the numbers under lossless compression scheme (float 32), which has 
the largest storage footprint. 

As we can see, delta scheme is not always good, due to the non-convexity and high entropy of parameters. For models under similar architectures, storing materialized original parameters is often better than applying delta encoding. With fine-tuning and nearby snapshots, the delta is always better, and arithmetic subtraction is consistently better than bitwise XOR. We saw similar results for many other models. These findings are useful for \weightstore\ implementation decisions, where we only perform delta between nearby snapshots in a single model, or for the fine-tuning setting among different models.

\eat{
\begin{figure}[!h]
\centering
\includegraphics[width=0.85\linewidth]{delta_perf_compact.eps}
\caption{Compression Performance for Different Delta Schemes \& Models}
\label{fig:exp2}
\end{figure}
}

\eat{
\begin{figure}[!h]
\centering
\includegraphics[width=0.85\linewidth]{mst_edge_stat_compact.eps}
\caption{Edge Relative Weight Distribution in the Version Graph}
\label{fig:exp3_data}
\end{figure}
}

\topic{Optimal Parameter Archival Storage:}
We study the optimality and scalability of \weightstore-PT and \weightstore-MT with the baseline LAST~\cite{khuller1995balancing} on the optimal parameter archival problem. We use \syntheticds\ here, from which we derive nearby snapshot deltas as well as model-wise deltas among the latest snapshots. \eat{Then we sample the model version graph to get various sizes of repository as the instance of the problem.} For the access frequencies, in practice, the modeler tends to access better models more frequently than bad ones. We use \eat{Zipf distribution as well as} multinomial distributions based on model accuracies to mimic the behavior. \eat{The delta edge weight distribution for \eat{SD1 and} SD2 is shown in Fig.~\ref{fig:exp3_data}.}
As LAST cannot handle the co-usage constraints, to have a reasonable baseline, we decompose the co-usage constraint to be proportional to each layer matrix size. The performance of the algorithms 
is shown in Fig.~\ref{fig:exp3}. Two dotted lines are the minimum possible storage cost (MST) and the recreation cost (SPT) for the independent scheme in version graph. Each data point is a storage plan found by the algorithms given a recreation cost constraint. As we can see, \weightstore\ algorithms (PT and MT) consistently perform better than LAST. The reason is both of them consider the constraints as a group and thus choose better storage plans.

\eat{
\begin{figure}[!h]
\centering
\includegraphics[width=0.85\linewidth]{mst_storage_accuracy_compact.eps}
\caption{\weightstore\ Version Graph Compression Results in SD2}
\label{fig:exp3}
\end{figure}
}

\eat{
\huicomment{y-axis to use weighted recreation needs }
\huicomment{extra plot for parallel scheme..}
\huicomment{scalability is omitted, as the SD2 is relatively small. runtime complexity in the algorithm sections can serve the purpose.}
}

\eat{
Next we show the scalability of \weightstore-MT and \weightstore-PT by vary the number of repositories. As in practice, the repository of handcrafted by modelers, we didn't show very large instances.
}

\topic{Progressive Query Evaluation:}
We study the efficiency of the progressive evaluation technique using perturbation error determination scheme on real world models (LeNet, AlexNet, VGG16) and their corresponding datasets. The original parameters are 4-byte floats, which are archived in segments in \weightstore. We modify \cmd{caffe} implementation of involved layers and pass two additional blobs (min/max errors) between layers. The perturbation error determination algorithm uses high order segments, and answers \cmd{eval} query on the whole test dataset. The algorithm determines whether top-k (1 or 5) result needs lower order bytes (i.e., matched index value range overlaps with $k+1$ index value range). 

The result is summarized in Fig.~\ref{fig:exp4}. The y-axis shows the error rate, i.e., the percentage of test dataset that may have perturbation errors due to low precision. The x-axis shows the percentage of data that needs to be retrieved (i.e., 2 bytes or 1 byte per float). As one can see, the prediction errors requiring full precision lower-order bytes are very small. The less high-order bytes used, higher the chance of potential errors. The consistent result of progressive query evaluation on real models supports our design decision of segmented float storage. 

\eat{
\begin{figure}[!h]
\centering
\includegraphics[width=0.85\linewidth]{prog_query_compact.eps}
\caption{Progressive Evaluation Query Processing Using High-Order Chunks}
\label{fig:exp4}
\end{figure}
}

\eat{
\huicomment{more data points is better? e.g. retrieve bits instead of 2byte/1byte?}
}

\eat{
\topic{End-to-End \weightstore\ Performance}:
To show the end-to-end performance of \weightstore, we report the total compression that \weightstore\ can archive on SD1 and SD2. Using segmentation, \weightstore\ can support queries at most cases without loading the lower order segmented, which can be saved on remote places or physical archival disks.


\huicomment{end-to-end is not conducted. need it?}

\huicomment{compress modelzoo models?}
}

\section{Related Work}
\label{sec:related_work}

\topic{Machine Learning Systems:} 
There have been several high-profile deep learning systems in recent years, but those typically focus on the training aspects (e.g., on distributed training, how to utilize GPUs or allow symbolic formulas, etc.)~\cite{caffe2014mm,tensorflow,distbelief2012nips,adam14osdi,singa2015mm}. 
The data \eat{management }and lifecycle management challenges discussed above have been largely ignored so far, but are becoming critical as the use of deep learning permeates through a variety of application domains, since those pose a high barrier to entry for many potential users.
In the database community, there has been increasing work on developing general-purpose systems for supporting machine learning\eat{~\cite{arun2015sigmod,mli2013spark,graphlab2012vldb}}, including pushing predictive models into databases~\cite{brown11cidr,madlib2012sigmod},
   accelerating tasks using database optimizing methods\eat{ physical design}~\cite{ce2014sigmod,arun2015sigmod}, 
and managing modeling lifecycles and serving predictive models in advanced ways~\cite{madden15learningsys,franklin2015learningsys}.  
\modelhub\ is motivated by similar principles; aside from a focus on DNNs, it also supports {\em versioning} as a first-class construct~\cite{datahub} which differentiates it from that work.

\topic{DNN Compression:} 
There has been increasing interest on compressing \dnn\ models, motivated in
part by the need to deploy them on devices with simple instruction sets, low
memory, and/or energy constraints~\cite{han2015deep,denton2014exploiting,sung2015resiliency}. 
However, the goal of those works is simplify the model in a lossy manner with as
little loss of accuracy as possible, which makes this work orthogonal to
\eat{lossless compression}the archival approach we take in \modelhub; in fact, simplified models
are likely to compress much better, magnifying the gains of our approach as
our experimental results show.
Further, these methods often require heavy retraining or expensive computations
({\em k}-means, SVD, etc.) to derive simpler models, which makes them too
heavyweight in an interactive setting for which \DLV\ is designed. 

\topic{DNNs with Low Precision Floats:} 
Low precision floats are exploited in accelerating training and testing systems \cite{vanhoucke2011improving,sung2015resiliency,gupta2015deep}\eat{courbariaux2014training}, showing techniques and empirical results when training, testing \dnn\ with limited precisions. \modelhub\ differs from their work by exploiting \eat{storing large collection of }parameters archiving, and use segmented floats to answer lifecycle modeling queries.

\topic{Stability Analysis Results} 
Stability analysis of \dnn\ is studied in the literature \cite{stevenson1990sensitivity,zeng2001sensitivity,yang2013computation}, where the problem setting is perturbation analysis regardless specific data, rather focus on statistical measures of stability. \modelhub\ uses basic perturbation analysis techniques and focus on novel progressive query answering in a segmented float storage.

\section{Conclusion and Future Work}
\label{sec:concolusion}
In this paper, we described some of the key data management challenges in learning, managing, and adjusting 
deep learning models, and presented our \modelhub\ system that attempts to address those challenges in a systematic
fashion. The goals of \modelhub\ are multi-fold: (a) to make it easy for a user to explore the space of 
potential models by tweaking the network architecture and/or the hyperparameter values, (b) to minimize the burden
in keeping track of the metadata including the accuracy scores and the fine-grained results, and (c) to compactly store
a large number of models and constituent snapshots without compromising on query or retrieval performance. 
We presented several high-level abstractions, including a command-line version management tool and a domain-specific
language, for addressing the first two goals. Anecdotal experience with our early users suggests that both of those are 
effective at simplifying the model exploration tasks. We also developed a read-optimized parameter archival storage for 
storing the learned weight parameters, and designed novel algorithms for storage optimization and for progressive query 
evaluation. Extensive experiments on real world and synthetic models verify the design decisions we made and demonstrate 
the advantages of proposed techniques.

\eat{
In this paper, we focused on describing and addressing

In this paper, we presented the \dnn\ modeling lifecycle and raise database
research challenges. The end-to-end learning method of \dnn\ expose modelers a
tunable black box. Where heuristic driven model enumeration via network
architecture adjustment and hyperparameter tuning sit in the core of the
lifecycle. A rich set of modeling artifacts (structure hyperparameter, graph
structure network architecture, binary parameters) are produced during the
process. We propose \modelhub, a client-server version control system to help
modelers manage modeling artifacts and ease their tasks. In \modelhub, we
abstract the lifecycle and propose declarative constructs, \dlv\ specialized
VCS interface, as well as \DQL\ domain specific model enumeration language. To
help modeler understand models, we propose model understanding schemes for both
network architecture and learned parameters. We further explore core database
management issues to manage the versioned float number parameters. A novel
chunked bit-wise float storage, \weightstore, is proposed. We show the
utilities of artifacts and study a new form of constraint in dataset versioning
setups. We also present a progressive query answering technique to benefit from
the segmented float stores. 
}

\balance

\eat{
{\small
\bibliographystyle{abbrv}
\bibliography{main}
}
}

\bibliographystyle{IEEEtran}
\def\IEEEbibitemsep{1pt}
\bibliography{IEEEabrv,main}

\end{document}